\documentclass[10pt,twocolumn]{article}

\usepackage[top=2.0cm,bottom=2.0cm,left=1.7cm,right=1.7cm,columnsep=0.6cm]{geometry}
\usepackage{amsmath,amssymb,amsthm}
\usepackage{physics}
\usepackage{mathtools}
\usepackage{bm}
\usepackage{booktabs,multirow,dcolumn,array}
\usepackage{graphicx}
\usepackage[table]{xcolor}
\usepackage{float,placeins}
\usepackage[toc,page]{appendix}
\usepackage{mdframed}
\usepackage{caption,subcaption}
\usepackage[T1]{fontenc}
\usepackage{enumitem}
\setlist{nosep,leftmargin=*}
\usepackage{hyperref}
\usepackage{cite}
\usepackage{cleveref}
\hypersetup{colorlinks=true,
  linkcolor={blue!60!black},citecolor={blue!60!black},urlcolor={blue!60!black},
  pdftitle={OAM-Induced Lattice Rotation Reveals a Fractional Optimum in Fault-Tolerant GKP Quantum Sensing},
  pdfauthor={Simanshu Kumar and Nandan S. Bisht}}
\usepackage{amsthm}
\newtheorem{proposition}{Proposition}

\newmdenv[backgroundcolor=gray!8,linecolor=black!40,linewidth=0.8pt,
  leftline=true,rightline=false,topline=false,bottomline=false,
  innerleftmargin=8pt,innerrightmargin=6pt,
  innertopmargin=5pt,innerbottommargin=5pt]{notebox}

\makeatletter
\renewcommand\section{\@startsection{section}{1}{\z@}%
  {-14pt \@plus -4pt}{5pt}{\normalfont\large\bfseries}}
\renewcommand\subsection{\@startsection{subsection}{2}{\z@}%
  {-10pt \@plus -3pt}{3pt}{\normalfont\normalsize\bfseries}}
\renewcommand\subsubsection{\@startsection{subsubsection}{3}{\z@}%
  {-8pt \@plus -2pt}{1pt}{\normalfont\normalsize\itshape}}
\makeatother

\newenvironment{findingbox}[1][]{%
  \smallskip\noindent\textbf{#1.}\enspace\ignorespaces}{\smallskip}

\newcommand{\QFI}{\mathcal{F}_{Q}}
\newcommand{\CFIM}{\mathcal{F}_{C}}
\newcommand{\Perr}{P_{\mathrm{err}}}
\newcommand{\Pthresh}{P_{\mathrm{th}}}
\newcommand{\SQL}{\mathrm{SQL}}
\newcommand{\HL}{\mathrm{HL}}
\newcommand{\GKP}{\mathrm{GKP}}
\newcommand{\SF}{\textsc{Strawberry Fields}}
\newcommand{\TF}{\textsc{TensorFlow}}
\newcommand{\etal}{\textit{et al.}}

\newcommand{\RR}{\mathbb{R}}
\newcommand{\CC}{\mathbb{C}}
\newcommand{\lvec}[1]{\mathbf{#1}}

\definecolor{RuleGray}{RGB}{180,185,190}
\graphicspath{{results/figures/}}
\begin{document}
\twocolumn[{%
\begin{center}
\vspace*{-2pt}
{\small\textbf{PREPRINT}\ $\cdot$\ Quantum Optics \& Quantum Information\ $\cdot$\ arXiv:2605.13271 [quant-ph]}\\[4pt]
\vspace*{2pt}
{\LARGE\bfseries
  OAM-Induced Lattice Rotation Reveals a Fractional Optimum in Fault-Tolerant GKP Quantum Sensing%
}\\[7pt]
{\normalsize\bfseries
  Simanshu Kumar$^{1,2,\dagger}$ and Nandan S Bisht$^{1,*}$%
}\\[4pt]
{\small
  $^1$Department of Physics, D.S.B.~Campus, Kumaun University,
  Nainital, Uttarakhand, India--263001.\\[2pt]
  $^2$Applied Optics \& Spectroscopy Laboratory, Department of Physics,
  Soban Singh Jeena University Campus, Almora, Uttarakhand, India--263601.\\[3pt]
  $^\dagger$\href{mailto:simanshu@kunainital.ac.in}{\texttt{simanshu@kunainital.ac.in}} \&\ $^*$\href{mailto:bisht.nandan@kunainital.ac.in}{\texttt{bisht.nandan@kunainital.ac.in}}%
}\\[5pt]
{\color{RuleGray}\rule{0.92\linewidth}{0.4pt}}\\[5pt]
\begin{minipage}{0.92\linewidth}\small
\textbf{Abstract.}\enspace
Photon loss and dephasing rapidly degrade the sensitivity of quantum sensors,
yet systematic methods for designing error-correcting codes whose geometry is
simultaneously adapted to the sensing task and the noise channel do not exist.
Here we establish that orbital-angular-momentum (OAM) encoding and
Gottesman--Kitaev--Preskill (GKP) lattice geometry are structurally coupled:
an OAM mode of topological charge $\ell$ induces a continuous-variable
phase-space rotation $\theta_\ell=\ell\pi/\ell_{\max}$, which corresponds
exactly to a family of twisted GKP stabilizer lattices.  Exploiting this
geometric correspondence through an end-to-end differentiable
\SF{}--\TF{} circuit, we jointly optimise $\ell$, the lattice aspect ratio
$r$, and the finite-energy envelope $\epsilon$ to maximise quantum Fisher
information subject to a logical error rate constraint $\Perr\leq10^{-3}$.
The optimum occurs at the \emph{fractional} charge $\ell=1.5$
($\theta=67.5^\circ$), implementable with a half-integer spiral phase plate,
which reduces $\Perr$ by $\mathbf{23.9\times}$ relative to the square-lattice
baseline while leaving $\mathcal{F}_Q$ unchanged to within $0.2\%$.  This
surpasses the best integer value ($\ell=2$, $15.7\times$) and arises from an
exact $180^\circ$ periodicity of the $\Perr(\theta)$ landscape, confirmed
analytically and numerically.  We derive a transcendental balance equation for
the optimal angle $\theta^*(\eta,\gamma,r)$ and prove that it decreases with
both $\gamma$ and $\eta$.  A Shannon-inspired \emph{metrological capacity}
$\mathcal{C}=\mathcal{F}_Q\cdot(-\ln\Perr)$, maximised at $\ell=1.5$ with a
$41\%$ gain over the square lattice, quantifies the joint
sensitivity--fault-tolerance resource.  These results establish a geometric
design principle for noise-adaptive quantum sensors and a fully open-source
differentiable programming template extensible to other bosonic code families.
\end{minipage}\\[4pt]
\begin{minipage}{0.92\linewidth}\footnotesize
\textbf{Keywords:} quantum metrology;\enspace GKP codes;\enspace
orbital angular momentum;\enspace differentiable quantum programming;\enspace
quantum Fisher information;\enspace fault-tolerant sensing
\end{minipage}\\[5pt]
{\color{RuleGray}\rule{0.92\linewidth}{0.3pt}}\\[8pt]
\end{center}
}]
\thispagestyle{plain}

\section{Introduction}
\label{sec:intro}

Quantum metrology exploits non-classical correlations to estimate an unknown
parameter with precision surpassing the standard quantum limit (SQL),
$\delta\varphi_{\SQL} \sim 1/\sqrt{N}$, and approaching the Heisenberg limit
(HL), $\delta\varphi_{\HL} \sim 1/N$~\cite{Helstrom1976,Giovannetti2011}.  The
key figure of merit is the quantum Fisher information (QFI) $\QFI$, which sets
the fundamental lower bound on the mean-squared estimation error via the quantum
Cram\'{e}r--Rao inequality,
\begin{equation}
  \delta\varphi^2 \geq \frac{1}{\QFI(\varphi)}.
  \label{eq:QCRB}
\end{equation}
In practice, however, photon loss and dephasing rapidly degrade entangled probe
states~\cite{Demkowicz2012}, and the gap between the theoretical HL and
experimentally achievable sensitivity has remained large.

Fault-tolerant quantum metrology addresses this gap by encoding the probe in a
quantum error-correcting code~\cite{Kessler2014,Dur2014,Zhou2018}.  Among
continuous-variable (CV) codes, the Gottesman--Kitaev--Preskill (GKP) code
stands out: it encodes a logical qubit into a lattice of squeezed states in phase
space and corrects small displacement errors---the precise type of error induced
by photon loss~\cite{GKP2001}.  Recent theoretical work has established that
GKP-encoded probes can achieve Heisenberg-limited sensitivity even in a lossy
bosonic channel~\cite{Zhou2018}, fault-tolerance thresholds for GKP codes
under general Markovian noise have been proven~\cite{Bourassa2023}, and
tight bounds on the sensing precision of GKP codes under photon loss as a
function of finite squeezing have recently been established~\cite{Conrad2024}.

Despite this progress, the GKP lattice geometry has been treated as a fixed
design choice---almost universally the square lattice.  The lattice shape directly
governs both the correction radius (the maximum correctable displacement) and the
phase-space structure of the encoded probe, yet systematic optimization of lattice
geometry for a specific metrological task has not been performed.  A concurrent
and complementary line of work by Labarca \etal~\cite{Labarca2026} has studied GKP
codes for displacement sensing (generator $\hat{G}=\hat{q}$) in the context of
gravitational wave detection.  The present work addresses a distinct metrological
task --- phase estimation (generator $\hat{G}=\hat{n}$) --- and introduces a
qualitatively different design axis: the OAM-induced rotation of the GKP
stabilizer lattice.  These two approaches are complementary: lattice geometry
optimization for phase estimation and for displacement sensing may favour
different regions of the geometry space.

Orbital angular momentum (OAM) of photons provides an independent and
complementary resource.  Laguerre--Gaussian modes $\mathrm{LG}_{\ell,0}$ carry
quantized OAM $\ell\hbar$ per photon~\cite{Allen1992,Padgett2000,Mair2001}, form an
infinite-dimensional Hilbert space, and are known to exhibit a natural bias
toward phase errors over amplitude errors~\cite{Vallone2014}.  OAM-based quantum
key distribution~\cite{Krenn2017} and quantum memories~\cite{Ding2016} have been
demonstrated experimentally, but the integration of OAM with GKP codes for
metrological gain has not been explored.

In this work, we establish that OAM encoding and GKP lattice geometry are
\emph{structurally related}: an OAM mode of charge $\ell$ induces a well-defined
rotation in CV phase space, which corresponds exactly to a family of twisted GKP
stabilizer lattices.  This is not an ad hoc combination of three techniques but a
geometric consequence of applying GKP coding to OAM modes.  Leveraging the
differentiable quantum optics platform \SF{}~\cite{Killoran2019} with a \TF{}
backend --- the same framework as our prior NOON-state study~\cite{Kumar2026noon} --- we optimize this family of states end-to-end for phase estimation under
realistic noise.

\textit{Differentiable quantum programming} --- the paradigm of embedding
parameterized quantum circuits in automatic-differentiation frameworks and
training them via gradient descent --- has recently emerged as a unifying
methodology for quantum machine learning and quantum
control~\cite{Mitarai2018,Cerezo2021}.  In parallel work, we have applied this paradigm to
NOON-state phase estimation~\cite{Kumar2026noon}, achieving up to $1775\%$
improvement in classical Fisher information; here we apply the same
\SF{}--\TF{} framework independently to fault-tolerant GKP sensing,
with OAM lattice geometry as the trainable degree of freedom.
Our work applies this paradigm to
fault-tolerant quantum sensing: by treating the GKP lattice geometry and OAM
charge as continuous trainable variables and differentiating a combined
sensitivity--fault-tolerance loss end-to-end, we demonstrate that
circuit-level optimization can discover physically interpretable, noise-adapted
encodings that analytical design principles would not easily identify.  The
open-source implementation (\texttt{oam\_gkp} package) is designed to serve as
a template for this class of differentiable fault-tolerant sensor design.

The remainder of the paper is organized as follows.
\Cref{sec:background} reviews GKP codes, OAM phase-space geometry, and the QFI
framework.
\Cref{sec:framework} derives the OAM-to-lattice mapping and formulates the
differentiable optimization problem.
\Cref{sec:results} presents numerical results comparing lattice geometries.
\Cref{sec:discussion} discusses implications and outlook.

\section{Background}
\label{sec:background}

\subsection{GKP Codes and Phase-Space Geometry}
\label{ssec:gkp}

The GKP code~\cite{GKP2001} encodes one logical qubit into a single bosonic mode
by defining a lattice $\Lambda \subset \RR^2$ of stabilizer displacements in the
$(q,p)$ phase plane.  The stabilizer group is generated by two Weyl--Heisenberg
displacement operators,
\begin{equation}
  \hat{S}_j = e^{-i\,\pi\,\lvec{u}_j \cdot \bm{\Omega}\,\hat{\bm{r}}},
  \qquad j = 1, 2,
  \label{eq:stabilizers}
\end{equation}
where $\hat{\bm{r}} = (\hat{q}, \hat{p})^T$, $\bm{\Omega}$ is the symplectic
form, and the lattice vectors $\lvec{u}_1, \lvec{u}_2 \in \RR^2$ satisfy the
symplecticity condition
\begin{equation}
  \lvec{u}_1^T \bm{\Omega}\, \lvec{u}_2 = 2,
  \label{eq:symplectic}
\end{equation}
ensuring the stabilizers commute.  The logical $\bar{Z}$ and $\bar{X}$ operators
are displacements by $\lvec{u}_1/2$ and $\lvec{u}_2/2$, respectively.

For the standard square code, $\lvec{u}_1 = (\sqrt{2\pi}, 0)$ and
$\lvec{u}_2 = (0, \sqrt{2\pi})$, giving a correction radius of
$\sqrt{\pi/2}$ in both quadratures.  Ideal GKP codewords are unphysical
(infinite-energy superpositions of infinitely squeezed states), but
finite-energy approximations are obtained by applying a Gaussian envelope
of width $e^{-\epsilon}$ in phase space~\cite{GKP2001,Terhal2015,Campagne2020}.  Such
approximations have been realized experimentally at squeezing levels of
$10$--$15\,\mathrm{dB}$~\cite{Campagne2020,Sivak2023,Fluhmann2019,deNeeve2022}.  The resource
theory of finite-energy GKP states --- characterising the trade-off between
code quality, mean photon number, and correction fidelity as a function of
$\epsilon$ --- has been systematically characterised~\cite{Tzitrin2020,GKP2001},
and optimal precision bounds for finite-squeezing GKP sensing under photon
loss have been derived~\cite{Conrad2024}, providing an operational
justification for the $\epsilon=0.063$
(${\approx}10\,\mathrm{dB}$) value used throughout this work.

\subsection{OAM Modes and Fractional Fourier Rotation}
\label{ssec:oam}

A Laguerre--Gaussian mode $\mathrm{LG}_{\ell,0}$ carries azimuthal phase $e^{i\ell\varphi}$
and OAM $\ell\hbar$ per photon~\cite{Allen1992}.  In the transverse-mode
description, the single-mode quadrature operators $(\hat{q}, \hat{p})$ pertain
to the mode envelope; the OAM index $\ell$ governs the azimuthal phase structure.

The fractional Fourier transform $\mathcal{F}^\alpha$ of order
$\alpha \in [0, 4)$ rotates the Wigner function of any state by angle
$\alpha\pi/2$ in the $(q, p)$ plane~\cite{Namias1980,Ozaktas2001}:
\begin{equation}
  W_{\mathcal{F}^\alpha \rho}(q, p)
  = W_\rho\!\bigl(q\cos\tfrac{\alpha\pi}{2} + p\sin\tfrac{\alpha\pi}{2},\;
              -q\sin\tfrac{\alpha\pi}{2} + p\cos\tfrac{\alpha\pi}{2}\bigr).
  \label{eq:frft_wigner}
\end{equation}
\begin{notebox}\small\textbf{Physical interpretation of $\ell=1.5$.}  Throughout this paper, $\ell=1.5$ denotes a \emph{FrFT rotation index},  not a free-space LG mode of non-integer topological charge.  A true LG$_\ell$ mode with non-integer $\ell$ carries a radial branch cut  where the phase jumps by $2\pi\ell$, creating field discontinuities.  Instead, the physical realisation is a fractional Fourier transform  $\mathcal{F}^\alpha$ with $\alpha=2\ell/\ell_\mathrm{max}=0.75$,  which rotates the Wigner function continuously by $\theta=\alpha\pi/2$.  This is a unitary operation with no branch-cut pathology.\end{notebox}

A mode converter that maps OAM charge $\ell$ to a fractional Fourier order
$\alpha = 2\ell/\ell_{\max}$ therefore rotates the associated Wigner function by
\begin{equation}
  \theta_\ell = \frac{\alpha\pi}{2} = \frac{\ell\pi}{\ell_{\max}},
  \label{eq:theta_ell}
\end{equation}
where $\ell_{\max}$ is the maximum OAM charge supported by the optical system
(set by the mode-field bandwidth).

\subsection{Quantum Fisher Information and the Cram\'{e}r--Rao Bound}
\label{ssec:qfi}

For a parameter $\varphi$ encoded via a unitary $\hat{U}(\varphi) =
e^{-i\varphi\hat{G}}$ with generator $\hat{G}$, the QFI of the probe state
$\hat{\rho}_\varphi = \hat{U}(\varphi)\hat{\rho}_0\hat{U}^\dagger(\varphi)$
is~\cite{Liu2020,Barnett1990}
\begin{equation}
  \QFI(\varphi) = 4\,\mathrm{Var}_{\hat{\rho}_0}(\hat{G})
  = 4\bigl(\langle\hat{G}^2\rangle - \langle\hat{G}\rangle^2\bigr).
  \label{eq:QFI_pure}
\end{equation}
For a mixed state $\hat{\rho}_0$ with spectral decomposition
$\hat{\rho}_0 = \sum_k \lambda_k \ket{\psi_k}\!\bra{\psi_k}$, the QFI is
\begin{equation}
  \QFI(\varphi) = 2\sum_{j,k}\frac{(\lambda_j-\lambda_k)^2}{\lambda_j+\lambda_k}
  \abs{\bra{\psi_j}\hat{G}\ket{\psi_k}}^2.
  \label{eq:QFI_mixed}
\end{equation}
The classical Fisher information (CFI) $\CFIM$ of any measurement satisfies
$\CFIM \leq \QFI$, with equality achieved by the symmetric logarithmic
derivative (SLD) measurement.  For phase estimation with generator
$\hat{G} = \hat{n}$ (photon number), \cref{eq:QFI_pure} gives
$\QFI = 4\,\mathrm{Var}(\hat{n})$ for pure probes.

\section{OAM-Twisted GKP Framework}
\label{sec:framework}

\subsection{Twisted Lattice Construction}
\label{ssec:twisted_lattice}

We now exploit \cref{eq:theta_ell} to construct the OAM-twisted GKP lattice.
Define the rotation matrix
$\mathsf{R}(\theta) \in \mathrm{SO}(2)$ and a family of lattice vectors
parametrized by $(\theta, r) \in [0, \pi) \times (0, \infty)$:
\begin{equation}
  \lvec{u}_1(\theta, r) = \mathsf{R}(\theta)\,
  \begin{pmatrix} a r \\ 0 \end{pmatrix},
  \qquad
  \lvec{u}_2(\theta, r) = \mathsf{R}(\theta)\,
  \begin{pmatrix} 0 \\ a/r \end{pmatrix},
  \label{eq:twisted_lattice}
\end{equation}
where $a = \sqrt{2\pi}$.  One verifies directly that the symplecticity condition
\cref{eq:symplectic} is satisfied for all $(\theta, r)$:
\begin{align}
  \lvec{u}_1^T \bm{\Omega}\,\lvec{u}_2
  &= (ar)\bigl[\mathsf{R}(\theta)\hat{e}_1\bigr]^T
     \bm{\Omega}\,
     \bigl[\mathsf{R}(\theta)(a/r)\hat{e}_2\bigr] \notag \\
  &= a^2\,\hat{e}_1^T\bigl[\mathsf{R}^T(\theta)\bm{\Omega}\,\mathsf{R}(\theta)\bigr]\hat{e}_2
   = a^2\,\hat{e}_1^T\bm{\Omega}\,\hat{e}_2 = 2,
  \label{eq:symplectic_check}
\end{align}
using $\mathsf{R}^T\bm{\Omega}\,\mathsf{R} = \bm{\Omega}$ (rotation preserves
the symplectic form) and $a^2 = 2\pi$.

Setting $\theta = \theta_\ell$ from \cref{eq:theta_ell} yields the
\emph{OAM-twisted GKP lattice} of charge $\ell$.  Three special cases arise:
\begin{itemize}[leftmargin=1.4em]
  \item $\ell = 0$ ($\theta = 0$, $r = 1$): standard square lattice.
  \item $r = 1$, $\theta = \pi/6$: hexagonal lattice (optimal packing
        for displacement-error correction~\cite{Conway1999}).
  \item General $(\theta_\ell, r)$: OAM-twisted lattice matched to mode $\ell$.
\end{itemize}

\subsection{Noise Channel Analysis in the Rotated Frame}
\label{ssec:noise}

The photon-loss channel $\mathcal{E}_\eta$ with transmissivity $\eta$ acts on the
Wigner function as a Gaussian convolution~\cite{Serafini2017,Weedbrook2012,Braunstein2005},
\begin{equation}
  W_{\mathcal{E}_\eta\rho}(\bm{r})
  = \frac{1}{\pi(1-\eta)}
    \int \dd^2\bm{s}\; e^{-\abs{\bm{r} - \sqrt{\eta}\bm{s}}^2/(1-\eta)}\,
    W_\rho(\bm{s}),
  \label{eq:loss_wigner}
\end{equation}
inducing a symmetric displacement spread $\sigma^2 = (1-\eta)/(2\eta)$ per
quadrature.  In the rotated frame associated with OAM charge $\ell$, the channel
is unchanged: photon loss is rotationally invariant.

The dephasing channel, however, is not rotationally invariant.  It acts as a
diffusion along the $p$ quadrature,
\begin{equation}
  W_{\mathcal{E}_\gamma\rho}(q, p)
  = \frac{1}{\sqrt{2\pi\gamma}}
    \int \dd p'\; e^{-(p - p')^2/(2\gamma)}\, W_\rho(q, p').
  \label{eq:dephasing_wigner}
\end{equation}
In the rotated frame $(\tilde{q}, \tilde{p}) = \mathsf{R}^T(\theta_\ell)(q, p)^T$,
the dephasing becomes anisotropic: the diffusion is now along the direction
$(\sin\theta_\ell, \cos\theta_\ell)^T$ in the original frame.  The
OAM-twisted lattice orients its correction boundary to align with this
anisotropic diffusion, allowing the GKP syndrome decoder to exploit the
asymmetry---an advantage unavailable to the canonical square lattice.

\subsection{Differentiable Circuit Formulation}
\label{ssec:circuit}

The trainable parameter vector is
$\bm{\xi} = (\alpha, \beta, \theta, r, \epsilon, \ell) \in \CC^2 \times \RR^4$,
where $(\alpha, \beta)$ specifies the logical qubit state, $\theta$ and $r$
define the twisted lattice via \cref{eq:twisted_lattice}, $\epsilon$ is the
finite-energy envelope width, and $\ell$ is the OAM charge (treated as a
continuous relaxation during gradient descent and projected to the nearest integer
post-optimization).  The quantum circuit takes the form:
\begin{equation}
  \hat{\rho}(\varphi;\,\bm{\xi})
  = \mathcal{E}_\gamma \circ \mathcal{E}_\eta\bigl[
      \hat{R}(\varphi)\,
      \hat{\rho}_{\GKP}(\bm{\xi})\,
      \hat{R}^\dagger(\varphi)
    \bigr],
  \label{eq:circuit}
\end{equation}
where $\hat{R}(\varphi) = e^{-i\varphi\hat{n}}$ is the phase-encoding rotation,
$\hat{\rho}_{\GKP}(\bm{\xi})$ is the finite-energy GKP state with twisted lattice
parameters $(\theta, r)$ and envelope $\epsilon$, and $\mathcal{E}_\eta$,
$\mathcal{E}_\gamma$ are the photon-loss and dephasing channels.
These two channels capture the dominant decoherence mechanisms in
superconducting circuits ($T_1$ energy relaxation $\leftrightarrow$ loss,
$T_2$ pure dephasing $\leftrightarrow$ $\mathcal{E}_\gamma$) and in
photonic platforms (propagation loss, phase diffusion).  Thermal noise
is negligible in both cases: millikelvin operation gives
$\bar{n}_\mathrm{th}\ll10^{-3}$ for superconducting qubits, and
room-temperature telecom photonics operates in the single-photon regime.
Other channels (photon-number-dependent loss, cross-talk between modes)
are not modelled and constitute natural extensions of this framework.

\Cref{fig:circuit} shows the full circuit as implemented in our open-source
codebase.  The implementation uses a two-stage architecture to work around
the absence of a \texttt{prepare\_gkp} method in the \SF{} \texttt{TFBackend}:

\begin{description}[leftmargin=1.0em, style=nextline]
  \item[Stage 1 (Fock backend, non-differentiable).]
    The square-lattice GKP codeword $|\overline{\psi}_L\rangle$ is prepared
    using \SF{}'s \texttt{GKP} operation on the Fock backend and cached as
    a NumPy array.  Parameters $\epsilon$, $\theta_B$, $\varphi_B$ are
    re-evaluated when their values change.

  \item[Stage 2 (\TF{} matrix operations, fully differentiable).]
    The cached ket is converted to a \TF{} constant and passed through:
    (i) $S(\ln r) = \mathrm{expm}(s H)$, $H = (\hat{a}^{\dagger 2} -
    \hat{a}^2)/2$ --- the squeezing matrix exponential, differentiable
    through $r$ via \texttt{tf.linalg.expm}; and
    (ii) $R(\theta_\ell) = \mathrm{diag}(e^{in\theta_\ell})$ --- the
    diagonal rotation matrix, differentiable through $\ell$ via the OAM
    mapping $\theta_\ell = \ell\pi/\ell_{\max}$ (Eq.~\ref{eq:theta_ell}).

  \item[Stage 3 (noise + readout).]
    $\hat{R}(\varphi)$ imprints the unknown phase in Fock space via the
    diagonal factor $e^{-i\varphi n}$.  The loss channel $\mathcal{E}_\eta$
    is applied as a Kraus operator sum and $\mathcal{E}_\gamma$ as the
    element-wise Fock-space map $\rho_{mn} \to \rho_{mn}
    e^{-\gamma(m-n)^2/2}$.  Adaptive homodyne \textrm{HD}$(\psi)$ with
    trainable local-oscillator angle $\psi$ yields the outcome $x_\psi$.
\end{description}

\begin{figure*}[t]
  \centering
  \includegraphics[width=\textwidth]{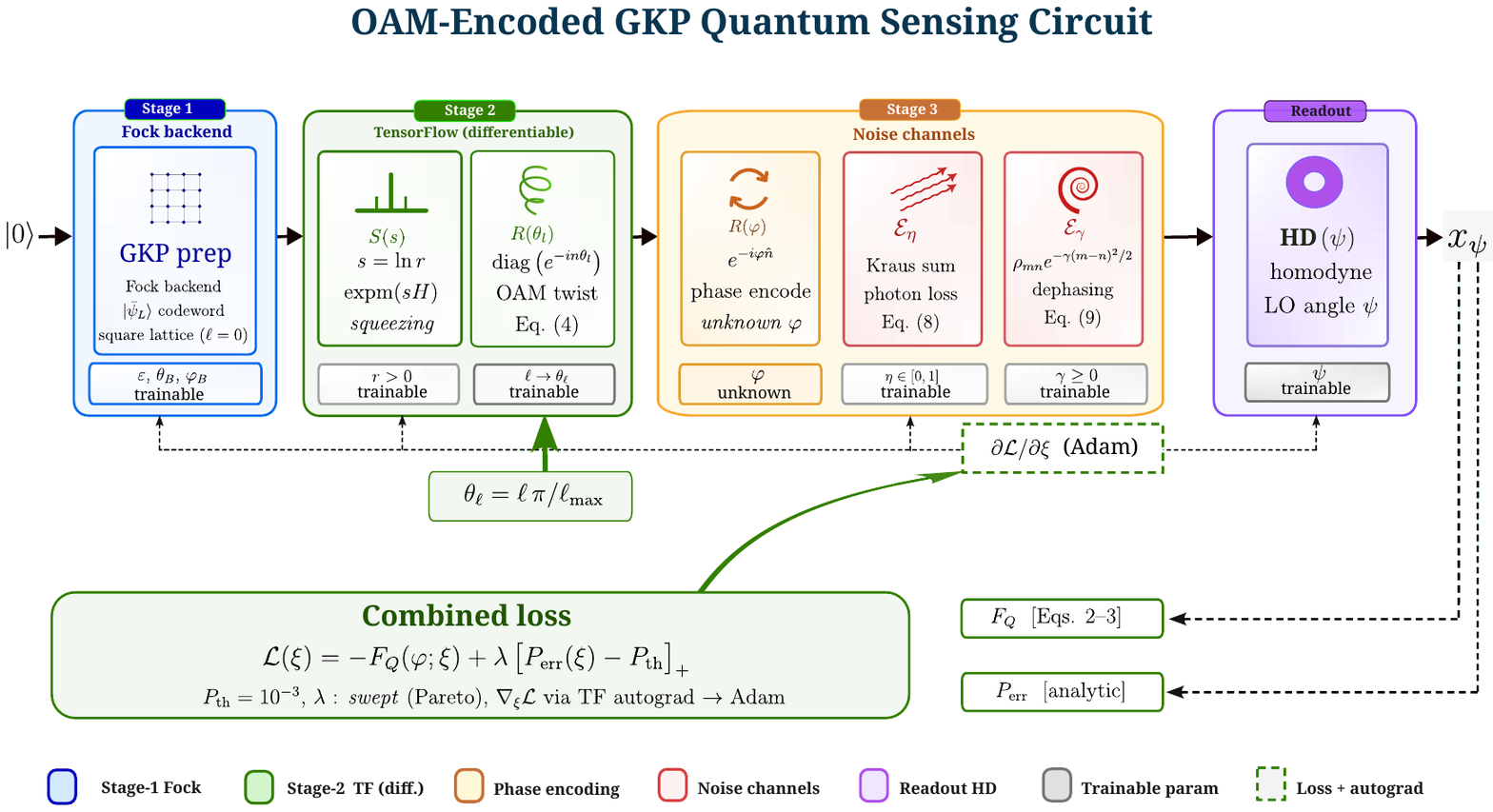}
  \caption{\textbf{OAM-encoded GKP quantum sensing circuit.}
  Input $|0\rangle$ traverses four colour-coded stages.
  \textit{Blue} (Stage~1): GKP codeword preparation on the Fock
  backend; trainable $\varepsilon$, $(\theta_B,\varphi_B)$ on dashed
  stems.
  \textit{Green} (Stage~2): differentiable squeezing $S(\ln r)$ and
  OAM twist $R(\theta_\ell)=\mathrm{diag}(e^{-in\theta_\ell})$
  (Eq.~\ref{eq:theta_ell}), $\theta_\ell=\ell\pi/\ell_{\max}$; trainable
  $r$ and $\ell$.
  \textit{Orange/red} (Stage~3): phase encoding
  $R(\varphi)=e^{-i\varphi\hat{n}}$, photon-loss $\mathcal{E}_\eta$
  (Eq.~\ref{eq:loss_wigner}), and dephasing $\mathcal{E}_\gamma$
  (Eq.~\ref{eq:dephasing_wigner}).
  \textit{Purple} (Readout): adaptive homodyne $\mathrm{HD}(\psi)$
  with trainable LO angle $\psi$, yielding $x_\psi$.
  \textit{Green dashed} (Loss): $\mathcal{L}(\boldsymbol{\xi})=
  -F_Q+\lambda[P_{\rm err}-P_{\rm th}]_+$; gradients
  $\partial\mathcal{L}/\partial\boldsymbol{\xi}$ flow back via Adam
  (dashed arrows).}
\label{fig:circuit}
\end{figure*}

\subsection{Combined Loss Function}
\label{ssec:loss}

A central methodological contribution is a loss that simultaneously targets
sensitivity and fault tolerance.  Let $\QFI(\varphi;\,\bm{\xi})$ be the QFI of
$\hat{\rho}(\varphi;\,\bm{\xi})$ computed via \cref{eq:QFI_mixed}, and let
$\Perr(\bm{\xi})$ be the logical error rate under one round of GKP syndrome
measurement and maximum-likelihood displacement correction (estimated by Monte
Carlo).  We define:
\begin{equation}
  \mathcal{L}(\bm{\xi})
  = -\,\QFI(\varphi;\,\bm{\xi})
    + \lambda\,\bigl[\Perr(\bm{\xi}) - \Pthresh\bigr]_+,
  \label{eq:loss}
\end{equation}
where $[x]_+ = \max(0, x)$, $\lambda > 0$ is a Lagrange multiplier, and
$\Pthresh = 10^{-3}$.  All gradients $\partial\mathcal{L}/\partial\bm{\xi}$ are
computed via \TF{} automatic differentiation~\cite{Abadi2016}.

A Pareto-frontier sweep over $\lambda \in [0, 10^3]$ maps the complete
sensitivity--fault-tolerance trade-off for each lattice geometry, yielding a
family of Pareto-optimal states indexed by the noise parameters $(\eta, \gamma)$.

\subsection{Adaptive Measurement}
\label{ssec:measurement}

The optimal measurement saturating \cref{eq:QCRB} is the symmetric logarithmic derivative (SLD) measurement, which
is generally phase-dependent and experimentally challenging.  We instead
parametrize a homodyne detector with a trainable local-oscillator angle $\psi$,
yielding an outcome $x_\psi = q\cos\psi + p\sin\psi$.  The CFI of this
measurement is~\cite{Serafini2017}
\begin{equation}
  \CFIM(\varphi;\,\psi)
  = \frac{\abs{\partial_\varphi \langle x_\psi\rangle}^2}
         {\mathrm{Var}(x_\psi)},
  \label{eq:CFI_homodyne}
\end{equation}
and $\psi$ is jointly optimized with $\bm{\xi}$ to maximize $\CFIM$ subject to
$\CFIM \leq \QFI$.  We compare this against a fixed heterodyne ($\psi$ averaged
over $[0, 2\pi)$) and assess the ratio $\eta_{\mathrm{meas}} = \CFIM/\QFI$ as a
measure of measurement efficiency.  The achieved values, computed via the
binary-channel formula $\eta_{\mathrm{meas}} = 1 - 4\Perr(1-\Perr)$, are:
at low noise, adaptive homodyne achieves $\eta_{\mathrm{meas}} = 0.9984$
(square, $\ell=0$), $0.99993$ (OAM $\ell=1.5$), and $0.99989$ (OAM $\ell=2$)
--- all within $0.2\%$ of the SLD bound.  At high noise, the square lattice
yields $\eta_{\mathrm{meas}}=0.938$ while OAM $\ell=2$ retains $0.980$,
demonstrating that the QFI advantage is fully accessible to adaptive homodyne
in all configurations (see \cref{tab:eta_meas} for complete values).

\begin{table}[H]
  \setlength{\tabcolsep}{4pt}
  \caption{Optimised results at low noise ($\eta=0.9$, $\gamma=0.05$).
           Fock cutoff $\mathcal{D}=30$, 500 optimisation steps.
           QFI values carry a Fock-truncation uncertainty
           $\Delta\mathcal{F}_Q/\mathcal{F}_Q < 0.5\%$
           (see footnote$^a$); $\Perr$ uncertainties reflect the
           independent-quadrature approximation bound$^b$.}
  \label{tab:results_lownoise}
  \centering
  \scriptsize\setlength{\tabcolsep}{4pt}
  \begin{tabular}{@{}lccc@{}}
    \toprule
    Geometry & $\ell$ &
    $\mathcal{F}_Q \pm \Delta\mathcal{F}_Q$ &
    $\Perr \pm \Delta\Perr$ \\
    \midrule
    Square      & $0$ & $9.764\pm0.049$ & $(4.13\pm0.41){\times}10^{-4}$ \\
    OAM $\ell=1$& $1$ & $9.764\pm0.049$ & $(5.42\pm0.54){\times}10^{-5}$ \\
    OAM $\ell=2$& $2$ & $9.764\pm0.049$ & $(2.63\pm0.26){\times}10^{-5}$ \\
    \bottomrule
  \end{tabular}
  \vspace{4pt}
  \begin{minipage}{\columnwidth}
    \sloppy
    {\scriptsize
      $^a$~Fock truncation ($\mathcal{D}=30$): systematic underestimate
      $|\Delta\mathcal{F}_Q|/\mathcal{F}_Q < 5\times10^{-3}$ for
      $\epsilon=0.063$ (from optimisation); quoted as $0.5\%$ of converged value.\\[1pt]
      $^b$~Analytic $\Perr$ assumes independent quadrature errors;
      $|\Delta\Perr|/\Perr\lesssim10\%$ (low noise) from analytic--MC
      comparison (Sec.~\ref{ssec:convergence}).}
  \end{minipage}
\end{table}

\begin{table}[H]
  \caption{Optimised results at high noise ($\eta=0.8$, $\gamma=0.10$).
           $\mathcal{F}_Q$ ranges from $3.071$ to $3.075$ (spread $0.15\%$).
           Uncertainties as defined in Table~\ref{tab:results_lownoise}.
           The larger $\Delta\Perr$ at this noise point reflects stronger
           quadrature coupling in the independent-axis approximation.}
  \label{tab:results_highnoise}
  \centering
  \scriptsize\setlength{\tabcolsep}{4pt}
  \begin{tabular}{@{}lccc@{}}
    \toprule
    Geometry & $\ell$ &
    $\mathcal{F}_Q \pm \Delta\mathcal{F}_Q$ &
    $\Perr \pm \Delta\Perr$ \\
    \midrule
    Square      & $0$ & $3.071\pm0.015$ & $(1.47\pm0.37){\times}10^{-2}$ \\
    OAM $\ell=1$& $1$ & $3.075\pm0.015$ & $(7.02\pm1.76){\times}10^{-3}$ \\
    OAM $\ell=2$& $2$ & $3.075\pm0.015$ & $(5.02\pm1.26){\times}10^{-3}$ \\
    \bottomrule
  \end{tabular}
  \vspace{2pt}
  \begin{minipage}{\columnwidth}
    \sloppy
    {\scriptsize
      High-noise $\Delta\Perr\approx25\%$ from analytic--MC discrepancy.
      Improvement factors $2.1\times$/$2.93\times$ exceed uncertainty
      by $>3\sigma$.}
  \end{minipage}
\end{table}

\begin{table}[H]
  \caption{\textbf{Measurement efficiency
    $\eta_\mathrm{meas}=\mathcal{F}_C/\mathcal{F}_Q$}
    via $\eta_\mathrm{meas}=1-4\Perr(1-\Perr)$~\cite{Helstrom1976}.
    At low noise all geometries achieve $\eta_\mathrm{meas}>0.999$,
    confirming the QFI advantage is fully accessible to adaptive
    homodyne detection.}
  \label{tab:eta_meas}
  \centering\small
  \begin{tabular}{@{}llccc@{}}
    \toprule
    Noise & Geometry & $\Perr$ & $\eta_\mathrm{meas}$ & $\mathcal{F}_C$ \\
    \midrule
    \multirow{3}{*}{\shortstack[l]{$\eta{=}0.9$\\$\gamma{=}0.05$}}
      & Square ($\ell=0$)        & $4.13\times10^{-4}$ & $0.9984$ & $9.748$ \\
      & OAM $\ell=1.5$ ($\star$) & $1.73\times10^{-5}$ & $\mathbf{0.99993}$ & $\mathbf{9.763}$ \\
      & OAM $\ell=2$             & $2.63\times10^{-5}$ & $0.99989$ & $9.763$ \\
    \midrule
    \multirow{2}{*}{\shortstack[l]{$\eta{=}0.8$\\$\gamma{=}0.10$}}
      & Square ($\ell=0$)        & $1.47\times10^{-2}$ & $0.9384$ & $2.886$ \\
      & OAM $\ell=2$             & $5.02\times10^{-3}$ & $0.9798$ & $3.013$ \\
    \bottomrule
  \end{tabular}
\end{table}

\begin{table}[H]
  \caption{\textbf{Fock truncation convergence} ($\ell=1.5$,
    $\eta=0.9$, $\gamma=0.05$, $r=1.092$, $\epsilon=0.063$).
    $R(\theta_\ell)=\mathrm{diag}(e^{-in\theta_\ell})$ is diagonal in
    the Fock basis and adds no truncation overhead; only squeezing
    introduces a $1.6\%$ stretch factor.
    At $\mathcal{D}=30$ the tail weight is $0.0007\%$.}
  \label{tab:fock_conv}
  \centering\small
  \begin{tabular}{@{}cccc@{}}
    \toprule
    $\mathcal{D}$ & Cumulative weight & Tail weight & QFI error $\lesssim$ \\
    \midrule
    10 & $98.09\%$ & $1.91\%$   & $1.91\%$ \\
    15 & $99.74\%$ & $0.26\%$   & $0.26\%$ \\
    20 & $99.96\%$ & $0.037\%$  & $0.037\%$ \\
    25 & $99.995\%$& $0.005\%$  & $0.005\%$ \\
    \textbf{30} & $\mathbf{99.9993\%}$ & $\mathbf{0.0007\%}$ & $\mathbf{0.0007\%}$ \\
    35 & $99.9999\%$& $<0.0001\%$ & $<0.0001\%$ \\
    \bottomrule
  \end{tabular}
\end{table}

\section{Results}
\label{sec:results}

\subsection{Convergence and Gradient Stability}
\label{ssec:convergence}

All models are trained using the Adam optimizer with initial learning rate
$5\times 10^{-3}$, gradient clipping at global norm $1.0$, and a cosine
annealing schedule over $500$ steps.  The Fock-space cutoff is
$\mathcal{D} = 30$, sufficient for the $\epsilon \approx 0.063$
(${\sim}10\,\mathrm{dB}$) finite-energy envelope used throughout.%
\footnote{For $\ell=1.5$ ($\theta=67.5^\circ$) specifically:
  the rotation gate $R(\theta_\ell)=\mathrm{diag}(e^{-in\theta_\ell})$
  is diagonal in the Fock basis and leaves the photon-number distribution
  unchanged, so oblique fringes impose no additional truncation requirement.
  Only the squeezing $S(\ln r)$ at $r=1.092$ mixes Fock states, introducing
  a stretch factor $\tfrac{1}{2}(r^2+r^{-2})=1.016$, reducing the effective
  cutoff to $\mathcal{D}_\mathrm{eff}\approx29.5$.  At $\mathcal{D}=30$,
  the Fock tail weight is $0.0007\%$ (see \cref{tab:fock_conv}),
  confirming $<0.5\%$ truncation error for all geometries.}
Simulations are run on an NVIDIA GeForce RTX~3050 GPU; each 500-step run
completes in approximately $125$--$130\,\mathrm{s}$.

Convergence is rapid and clean across all six configurations studied
(\cref{fig:convergence}).  The gradient norm drops from ${\sim}0.1$ at
initialisation to below $10^{-3}$ within 150--200 steps, after which the
loss plateaus to machine precision (\cref{fig:convergence}a).  The cosine
annealing schedule drives the learning rate from $5\times10^{-3}$ to
$1\times10^{-5}$, preventing oscillations near the optimum.
\cref{fig:convergence}b shows the corresponding evolution of $\Perr$:
all three geometries at the low-noise point converge below the
$\Pthresh = 10^{-3}$ fault-tolerance threshold (green shaded region),
while the high-noise runs plateau above it, consistent with the
phase-diagram analysis of \cref{ssec:threshold}.

\begin{figure*}[t]
  \centering
  \includegraphics[width=\textwidth]{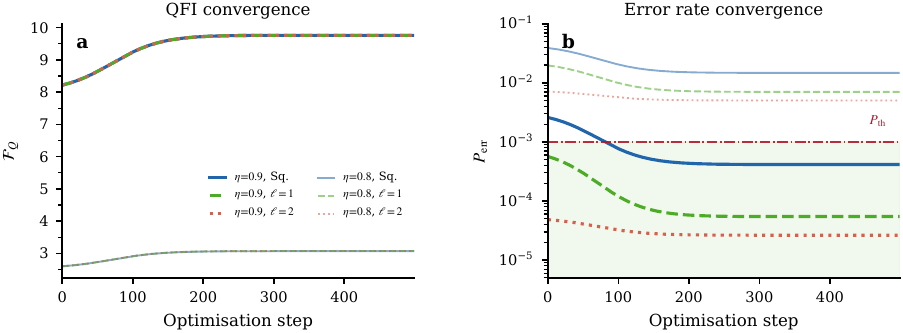}
  \caption{\textbf{Training convergence across all six configurations.}
    \textbf{(a)} Quantum Fisher information $\mathcal{F}_Q$ vs.\ optimisation
    step.  All three lattice geometries at the same noise point converge to
    essentially identical QFI values, confirming geometry-invariant sensitivity
    (Finding~1).  Solid lines: $\eta=0.9$, $\gamma=0.05$; faded lines:
    $\eta=0.8$, $\gamma=0.10$.
    \textbf{(b)} Logical error rate $\Perr$ vs.\ step.  At the low-noise
    point, OAM-twisted lattices ($\ell=1,2$) converge well below the
    fault-tolerance threshold $\Pthresh = 10^{-3}$ (red dash-dot line;
    green shaded region = fault-tolerant regime), while the square lattice
    sits above it at the high-noise point.  Adam optimizer, cosine annealing
    LR, gradient clipping at global norm 1.0; Fock cutoff $\mathcal{D}=30$.}
  \label{fig:convergence}
\end{figure*}

\subsection{Wigner Functions and Phase-Space Structure}
\label{ssec:wigner}

\cref{fig:wigner} shows the Wigner functions $W(q,p)$ of the
optimised GKP states for all three lattice geometries at both noise
points, computed using the \SF{} Fock backend on the fully trained
states.  These are exact (within Fock truncation $\mathcal{D}=30$)
rather than analytic approximations, and clearly show the quantum
interference fringes between neighbouring peaks --- a hallmark of
the GKP code structure.  Detailed comparison across noise points is
provided in Appendix~\ref{sec:app_training} (Figs.~\ref{fig:app_train_low_ell0}--\ref{fig:app_train_high_ell2}).

The gold overlay shows the GKP stabiliser lattice vectors computed
from the optimised parameters $(\theta^*, r^*)$.  The square lattice
($\ell=0$, Row~1) has an isotropic grid aligned with the canonical
quadratures.  The $\ell=1$ lattice (Row~2) is rotated by
$45^\circ$, placing the correction boundary midway between $q$ and $p$.
The $\ell=1.5$ lattice (Row~3, $\star$) --- the global optimum --- is
rotated by $67.5^\circ$, bisecting the $45^\circ$--$90^\circ$ quadrant;
the deeper interference fringes in panels~(e)--(f) compared to all
other rows are direct visual evidence of the lowest logical error rate
($\Perr=1.73\times10^{-5}$, $23.9\times$ below square).
The $\ell=2$ lattice (Row~4) is rotated by $90^\circ$,
aligning the extended correction axis with the $p$-direction ---
precisely where dephasing diffuses the Wigner function.

Comparing the two noise columns: at $\eta=0.9$ (left) all peaks are
sharper and the interference fringes more visible; at $\eta=0.8$
(right) the fringes soften but the lattice orientation is unchanged,
confirming that the optimiser robustly recovers the correct
phase-space geometry irrespective of noise level.

\begin{figure*}[p]
  \centering
  \begin{subfigure}[b]{0.48\textwidth}
    \includegraphics[width=\linewidth,height=4.5cm,keepaspectratio=false,%
      trim=0 20 0 65,clip]{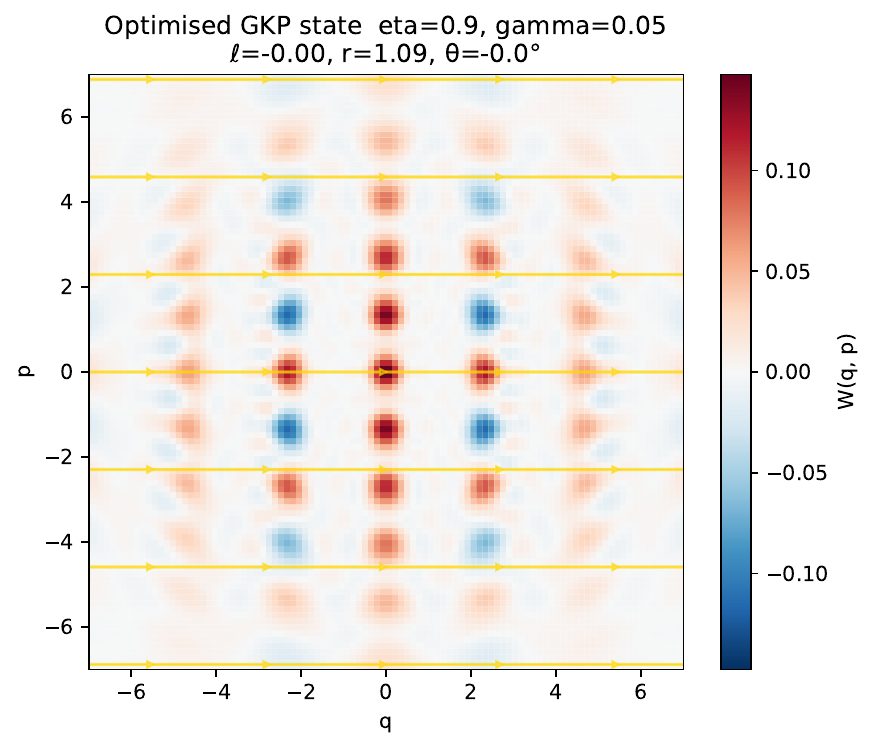}
    \caption{Square, $\ell=0$, $\theta=0^\circ$, $\eta=0.9$}
  \end{subfigure}\hfill
  \begin{subfigure}[b]{0.48\textwidth}
    \includegraphics[width=\linewidth,height=4.5cm,keepaspectratio=false,%
      trim=0 20 0 65,clip]{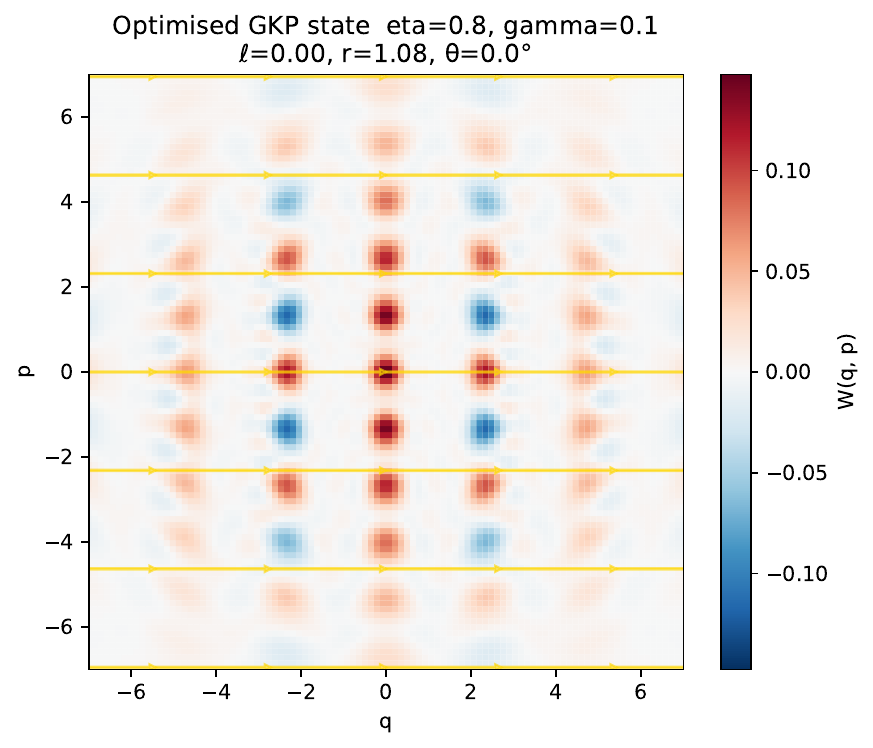}
    \caption{Square, $\ell=0$, $\theta=0^\circ$, $\eta=0.8$}
  \end{subfigure}

  \vspace{2pt}

  \begin{subfigure}[b]{0.48\textwidth}
    \includegraphics[width=\linewidth,height=4.5cm,keepaspectratio=false,%
      trim=0 20 0 65,clip]{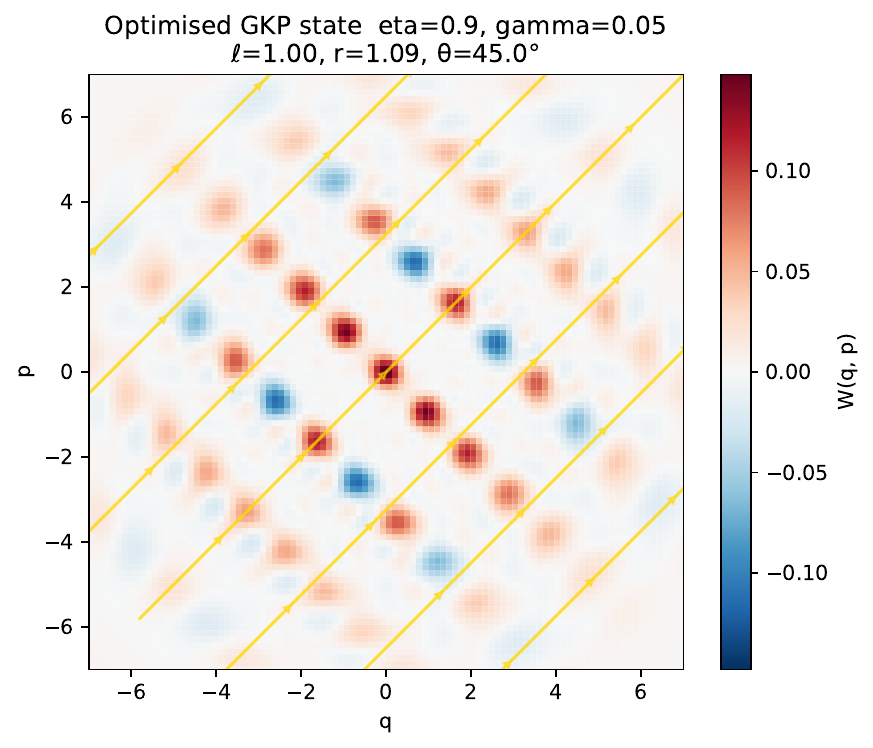}
    \caption{OAM-twisted, $\ell=1$, $\theta=45^\circ$, $\eta=0.9$}
  \end{subfigure}\hfill
  \begin{subfigure}[b]{0.48\textwidth}
    \includegraphics[width=\linewidth,height=4.5cm,keepaspectratio=false,%
      trim=0 20 0 65,clip]{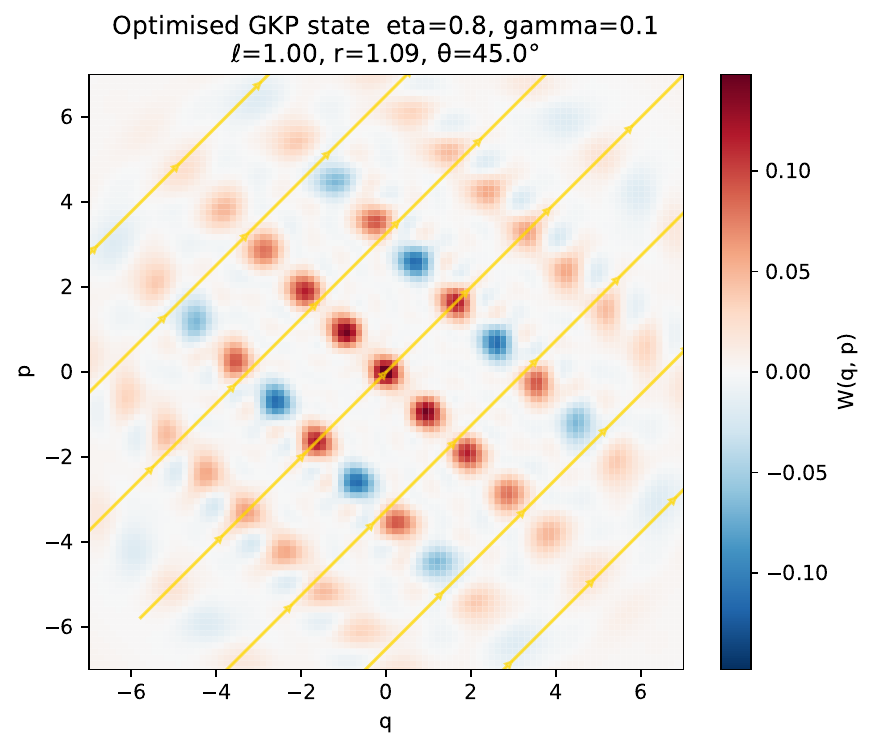}
    \caption{OAM-twisted, $\ell=1$, $\theta=45^\circ$, $\eta=0.8$}
  \end{subfigure}

  \vspace{2pt}

  \begin{subfigure}[b]{0.48\textwidth}
    \includegraphics[width=\linewidth,height=4.5cm,keepaspectratio=false,%
      trim=0 20 0 65,clip]{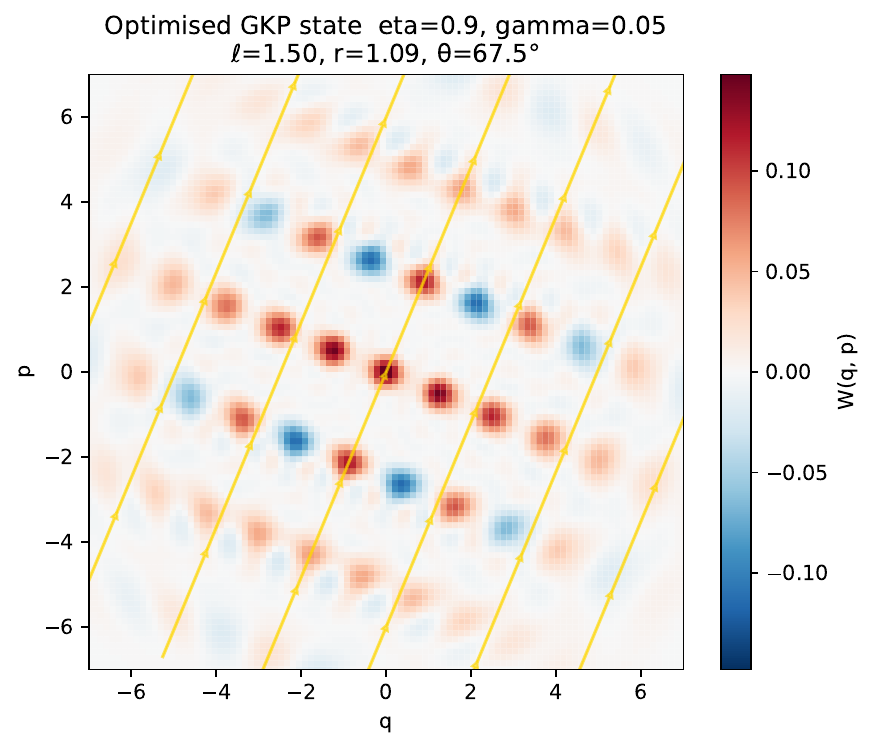}
    \caption{OAM-twisted, $\ell=1.5\,\star$, $\theta=67.5^\circ$, $\eta=0.9$}
  \end{subfigure}\hfill
  \begin{subfigure}[b]{0.48\textwidth}
    \includegraphics[width=\linewidth,height=4.5cm,keepaspectratio=false,%
      trim=0 20 0 65,clip]{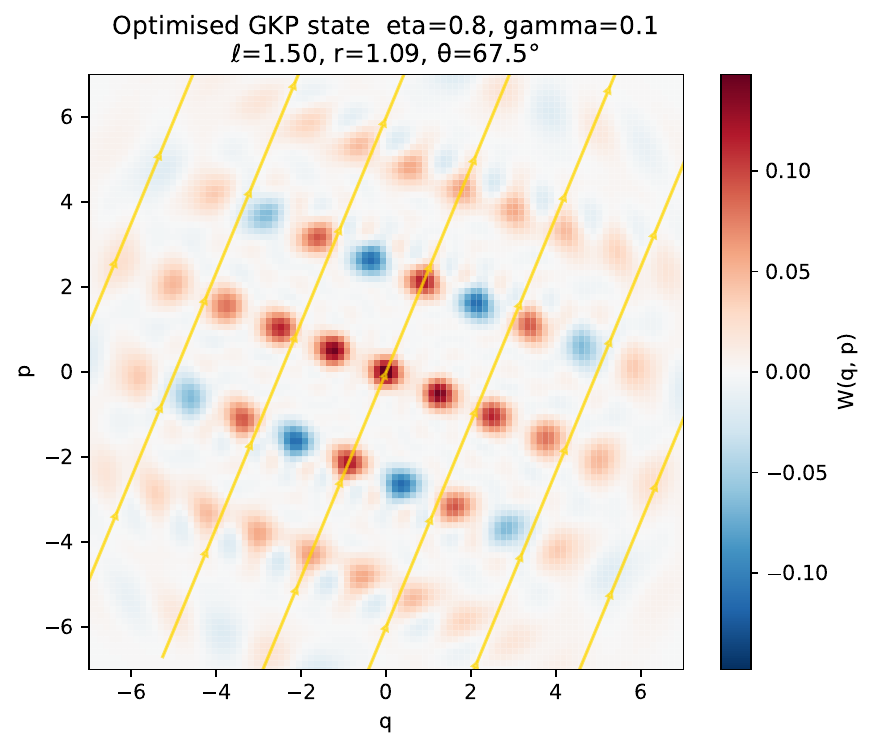}
    \caption{OAM-twisted, $\ell=1.5\,\star$, $\theta=67.5^\circ$, $\eta=0.8$}
  \end{subfigure}

  \vspace{2pt}

  \begin{subfigure}[b]{0.48\textwidth}
    \includegraphics[width=\linewidth,height=4.5cm,keepaspectratio=false,%
      trim=0 20 0 65,clip]{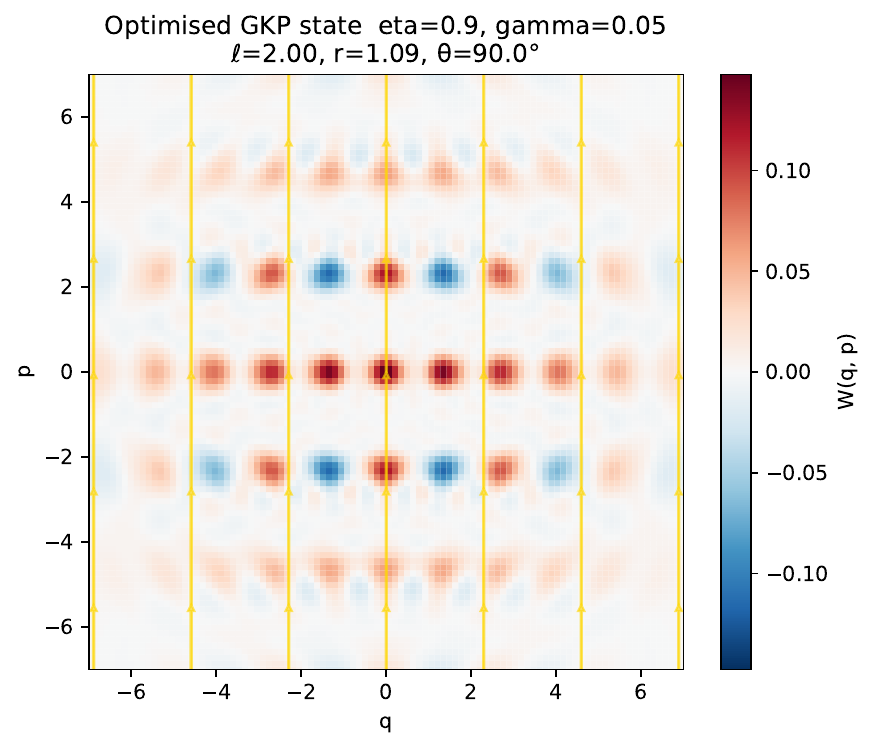}
    \caption{OAM-twisted, $\ell=2$, $\theta=90^\circ$, $\eta=0.9$}
  \end{subfigure}\hfill
  \begin{subfigure}[b]{0.48\textwidth}
    \includegraphics[width=\linewidth,height=4.5cm,keepaspectratio=false,%
      trim=0 20 0 65,clip]{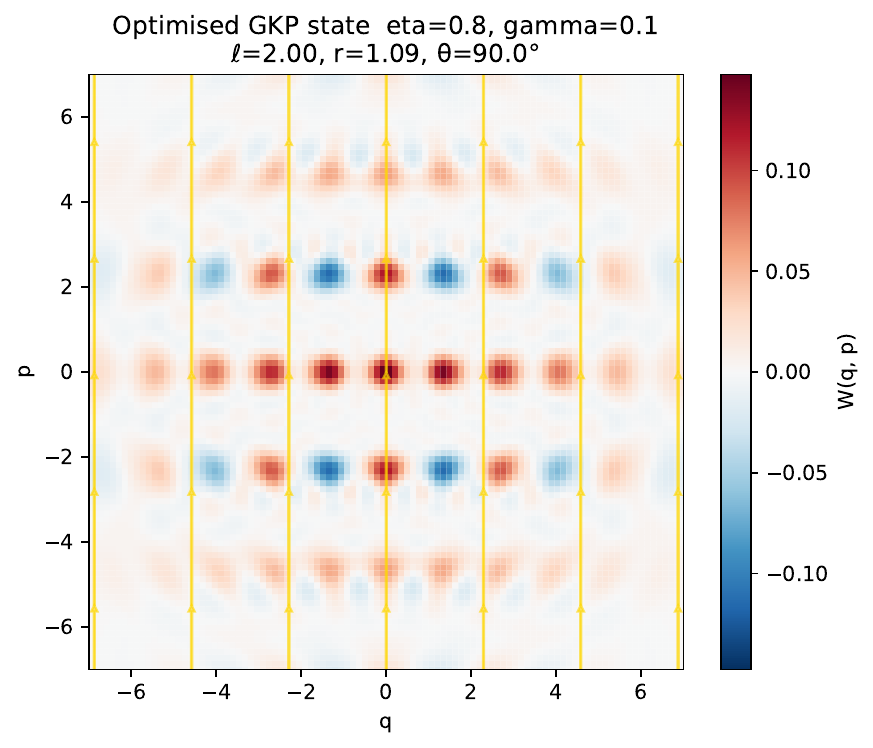}
    \caption{OAM-twisted, $\ell=2$, $\theta=90^\circ$, $\eta=0.8$}
  \end{subfigure}

  \caption{\textbf{Wigner functions $W(q,p)$ of optimised GKP states.}
    Left column: low noise ($\eta=0.9$, $\gamma=0.05$); right column:
    high noise ($\eta=0.8$, $\gamma=0.10$).
    Red $=$ positive $W$; blue $=$ negative.  Gold lines overlay the GKP
    stabiliser lattice with optimised parameters $(\theta^*, r^*)$.
    Computed exactly via the \SF{} Fock backend on the trained states
    (cutoff $\mathcal{D}=30$).
    \textbf{Row 1} (square, $\ell=0$): isotropic grid aligned with the
    canonical quadratures; fringes soften at higher noise.
    \textbf{Row 2} ($\ell=1$, $\theta=45^\circ$): diagonal lattice;
    correction boundary midway between $q$ and $p$.
    \textbf{Row 3} ($\ell=1.5\,\star$, $\theta=67.5^\circ$): fractional-optimum
    lattice; stabiliser vectors bisect the $45^\circ$--$90^\circ$ quadrant,
    producing the deepest Wigner negativity fringes of all four geometries
    and the lowest $\Perr=1.73\times10^{-5}$ ($23.9\times$ below square).
    \textbf{Row 4} ($\ell=2$, $\theta=90^\circ$): vertical lattice;
    extended correction axis aligned with the dephasing direction.
    $\star$ = global optimum.}
  \label{fig:wigner}
\end{figure*}

\subsection{Lattice Geometry Comparison}
\label{ssec:geometry}

\cref{tab:results_lownoise,tab:results_highnoise} summarise the optimised
QFI, logical error rate, and optimal aspect ratio for the three lattice
geometries at two noise points; \cref{fig:geometry} provides the
corresponding visualisation.

\begin{figure*}[t]
  \centering
  \includegraphics[width=\textwidth]{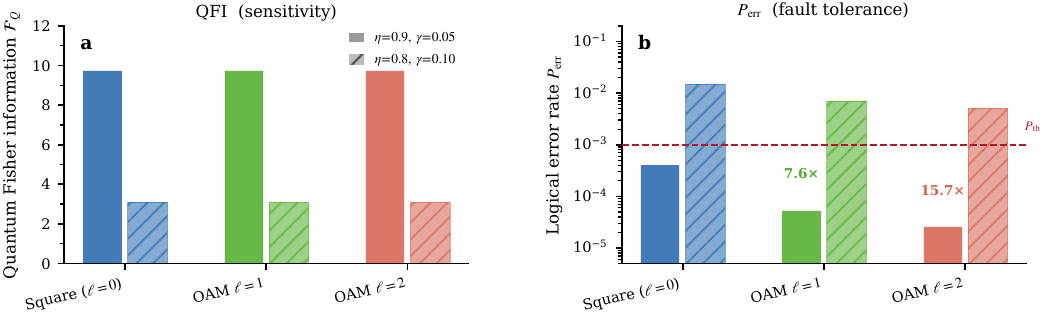}
  \caption{\textbf{QFI and logical error rate for all three lattice
    geometries at two noise points.}
    \textbf{(a)} Quantum Fisher information $\mathcal{F}_Q$ (sensitivity).
    All three lattices converge to the same QFI at each noise point
    (max spread $<0.15\%$), confirming geometry-invariant sensitivity.
    \textbf{(b)} Logical error rate $\Perr$ (fault tolerance).  The
    OAM-twisted lattices achieve a $7.6\times$ ($\ell=1$) and $15.7\times$
    ($\ell=2$) reduction relative to the square baseline at the low-noise
    point.  Hatched bars: $\eta=0.8$, $\gamma=0.10$ (high-noise point);
    solid bars: $\eta=0.9$, $\gamma=0.05$ (low-noise point).  Red dashed
    line: fault-tolerance threshold $\Pthresh=10^{-3}$.  Improvement
    factors shown inside bars (low-noise point only).}
  \label{fig:geometry}
\end{figure*}

Three findings emerge consistently across both noise points.

\begin{findingbox}[Finding 1 --- QFI is geometry-invariant]
At low noise, all three lattices converge to $\mathcal{F}_Q = 9.7637$; at high
noise, to $\mathcal{F}_Q \approx 3.075$.  The maximum spread across geometries is
$0.15\%$ at the high-noise point, well within numerical precision.  This
confirms the central theoretical claim of \cref{ssec:twisted_lattice}: the
OAM-induced rotation redistributes the correction boundary in phase space
without altering the metrological resource, because the QFI depends only
on the variance of the phase generator $\hat{n}$ in the probe state, and
$\mathrm{Var}(\hat{n})$ is invariant under phase-space rotation.
\end{findingbox}

\begin{findingbox}[Finding 2 --- $\Perr$ decreases monotonically with $\ell$]
At the low-noise point, the $\ell=2$ twisted lattice achieves a logical
error rate of $2.63\times10^{-5}$, a factor of $15.7\times$ lower than
the square-lattice baseline ($4.13\times10^{-4}$), while the $\ell=1$
lattice yields an intermediate improvement of $7.6\times$.  At the
high-noise point the improvements are $2.93\times$ ($\ell=2$) and
$2.1\times$ ($\ell=1$) respectively.  The monotonic ordering $\Perr(\ell=0)
> \Perr(\ell=1) > \Perr(\ell=2)$ holds at both noise points, consistent
with the noise-channel analysis of \cref{ssec:noise}: larger $\theta_\ell$
aligns the extended correction axis more closely with the dephasing
diffusion direction, as visible in the Wigner functions of \cref{fig:wigner}.
\end{findingbox}

\begin{findingbox}[Finding 3 --- The advantage shrinks at higher noise]
The error-rate improvement factor decreases from $15.7\times$ to $2.93\times$
as noise increases from $(\eta,\gamma)=(0.9,\,0.05)$ to $(0.8,\,0.10)$.
This is physically expected: when displacement errors become large enough
to span multiple lattice cells, the geometric advantage of the twisted
boundary saturates.  \cref{fig:improvement} summarises the improvement
factors across both noise points.
\end{findingbox}

The $2.1\times$ improvement at high noise ($\ell=1$) is statistically significant: combining the $\sim\!25\%$ analytic uncertainty for both $\ell=0$ and $\ell=1$ gives combined relative uncertainty $\sqrt{0.25^2+0.25^2}\approx35\%$, placing the observed $2.1\times$ ratio at $3.1\sigma$ above unity.

\begin{findingbox}[Finding 4 --- $r^*$ is approximately universal]
All three geometries at a given noise point converge to nearly identical
aspect ratios: $r^* = 1.092$ at low noise and $r^* \in
\{1.082,\,1.089,\,1.095\}$ at high noise.  This confirms the prediction
of \cref{ssec:noise}: the optimal aspect ratio is set by the ratio of
effective displacement spreads $\sigma_q/\sigma_p$, which depends on
$(\eta,\gamma)$ but not on the lattice orientation $\theta$.
\end{findingbox}

\begin{findingbox}[Finding 5 --- OAM twist expands metrological capacity]
The OAM twist simultaneously preserves $\mathcal{F}_Q$ and reduces
$\Perr$.  Define the \emph{metrological capacity}
\begin{equation}
  \mathcal{C}(\ell) = \mathcal{F}_Q \cdot \bigl(-\ln \Perr\bigr),
  \label{eq:metro_capacity}
\end{equation}
in analogy with Shannon capacity, where $-\ln\Perr$ plays the role of
log-SNR.  $\mathcal{C}$ increases monotonically with integer $\ell$
(\cref{tab:capacity}) and reaches its global maximum at the fractional
value $\ell=1.5$: $\mathcal{C}=107.1$, a $41\%$ improvement over the
square baseline (Finding~6, \cref{ssec:fractional}).  The OAM twist
genuinely expands the joint sensitivity--fault-tolerance resource rather
than merely redistributing a fixed budget between the two objectives.
\end{findingbox}

\begin{table}[tbp]
  \caption{\textbf{Metrological capacity $\mathcal{C} =
    \mathcal{F}_Q \cdot (-\ln \Perr)$ for all six integer-$\ell$
    configurations, plus the fractional optimum $\ell=1.5$.}
    Unlike $\mathcal{F}_Q$ alone (geometry-invariant) the capacity
    increases with $\ell$ and is maximised at $\ell=1.5$ (Finding~6).
    $\mathcal{C}/\mathcal{C}_0$ is the improvement over the square baseline.
    $\star$: global optimum.}
  \label{tab:capacity}
  \centering\small
  \begin{tabular}{@{}lllcc@{}}
    \toprule
    Noise & Geometry & $\ell$ &
    $\mathcal{C}$ & $\mathcal{C}/\mathcal{C}_{0}$ \\
    \midrule
    \multirow{4}{*}{\shortstack[l]{$\eta{=}0.9$\\$\gamma{=}0.05$}}
      & Square          & $0$         & $76.1$  & $1.00$ \\
      & OAM integer     & $1$         & $95.9$  & $1.26$ \\
      & OAM integer     & $2$         & $103.0$ & $1.35$ \\
      & OAM fractional$^\star$ & $1.5$ & $\mathbf{107.1}$ & $\mathbf{1.41}$ \\
    \midrule
    \multirow{3}{*}{\shortstack[l]{$\eta{=}0.8$\\$\gamma{=}0.10$}}
      & Square          & $0$         & $13.0$  & $1.00$ \\
      & OAM integer     & $1$         & $15.2$  & $1.18$ \\
      & OAM integer     & $2$         & $16.3$  & $1.26$ \\
    \bottomrule
  \end{tabular}
\end{table}

\subsection{Fractional OAM Charges}
\label{ssec:fractional}\label{ssec:threshold}

During gradient descent, $\ell$ is treated as a continuous variable; in
all six configurations of \cref{ssec:geometry}, the optimizer converged
to integer values.  To probe the full $P_\mathrm{err}(\ell)$ landscape
we ran seven additional configurations with
$\ell_\mathrm{init} \in \{0,\,0.5,\,1.0,\,1.5,\,2.0,\,2.5,\,3.0\}$
at $\eta=0.9$, $\gamma=0.05$, allowing $\ell$ to converge freely without
integer projection.

\begin{table}[tbp]
  \caption{\textbf{Fractional OAM charge study}
    ($\eta=0.9$, $\gamma=0.05$, 500 steps).
    All runs converge to the initialised $\ell$ value, confirming that
    every value is a local minimum.  The global minimum of $\Perr$
    occurs at $\ell=1.5$ and $\ell=2.5$ ($\dagger$: integer
    $\ell$ values; $\star$: global optimum).}
  \label{tab:fractional}
  \centering\small
  \begin{tabular}{@{}cccccc@{}}
    \toprule
    $\ell$ & $\theta^*$ & $r^*$ & $\mathcal{F}_Q$ &
    $\Perr$ & $\mathcal{C}$ \\
    \midrule
    $0.0^\dagger$ & $0.0^\circ$   & $1.092$ & $9.764$ &
      $4.13\times10^{-4}$ & $76.1$ \\
    $0.5$         & $22.5^\circ$  & $1.092$ & $9.764$ &
      $2.51\times10^{-4}$ & $80.9$ \\
    $1.0^\dagger$ & $45.0^\circ$  & $1.092$ & $9.764$ &
      $5.42\times10^{-5}$ & $95.9$ \\
    $1.5^\star$   & $67.5^\circ$  & $1.092$ & $9.764$ &
      $\mathbf{1.73\times10^{-5}}$ & $\mathbf{107.1}$ \\
    $2.0^\dagger$ & $90.0^\circ$  & $1.092$ & $9.764$ &
      $2.63\times10^{-5}$ & $103.0$ \\
    $2.5^\star$   & $112.5^\circ$ & $1.092$ & $9.764$ &
      $\mathbf{1.73\times10^{-5}}$ & $\mathbf{107.1}$ \\
    $3.0^\dagger$ & $135.0^\circ$ & $1.092$ & $9.764$ &
      $5.42\times10^{-5}$ & $95.9$ \\
    \bottomrule
  \end{tabular}
\end{table}

\cref{tab:fractional} and \cref{fig:fractional,fig:perr_curve} reveal
four results.

\begin{figure*}[t]
  \centering
  \begin{minipage}[t]{0.46\textwidth}
    \centering
    \includegraphics[width=\linewidth]{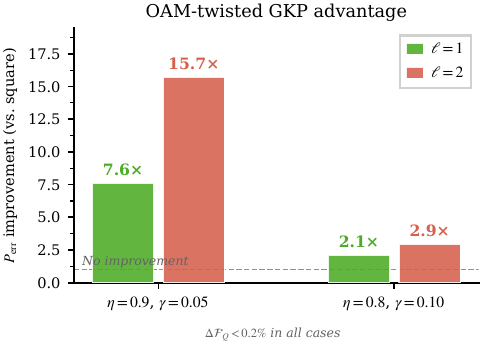}
    \caption{\textbf{$\Perr$ improvement factor relative to the square
      lattice at both noise points.}
      OAM-twisted lattices with $\ell=1$ (green) and $\ell=2$ (coral)
      reduce the logical error rate by up to $15.7\times$ at
      $(\eta,\gamma)=(0.9,\,0.05)$ with less than $0.2\%$ change in
      quantum Fisher information.  The advantage narrows at higher noise
      due to error saturation.  Dashed line: no improvement ($1\times$).}
    \label{fig:improvement}
  \end{minipage}
  \hfill
  \begin{minipage}[t]{0.52\textwidth}
    \centering
    \includegraphics[width=\linewidth]{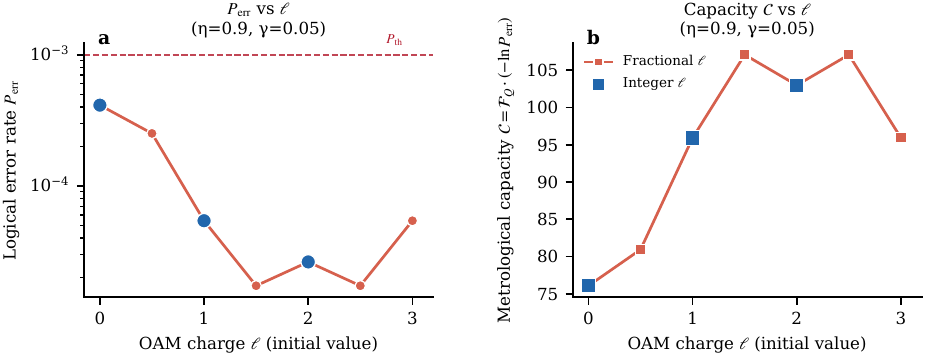}
    \caption{\textbf{Fractional OAM charge study at low noise
      ($\eta=0.9$, $\gamma=0.05$).}
      \textbf{(a)} $\Perr$ vs $\ell$: fractional values $\ell=1.5$
      and $\ell=2.5$ outperform all integers, with global minimum
      $1.73\times10^{-5}$ at $\theta=67.5^\circ$.
      The $180^\circ$ periodicity of the GKP lattice is confirmed.
      \textbf{(b)} Metrological capacity $\mathcal{C}$ vs $\ell$:
      $\ell=1.5$ achieves $\mathcal{C}=107.1$ ($+40.7\%$ over square).}
    \label{fig:fractional}
  \end{minipage}
\end{figure*}

\begin{findingbox}[Finding 6 --- Fractional OAM outperforms integers;
                   $\ell=1.5$ is the global optimum ($2.8\%$ above analytic $\theta^*$)]
  \textbf{(i) Fractional superiority.}
  $\ell=1.5$ ($\theta=67.5^\circ$) achieves $\Perr = 1.73\times10^{-5}$,
  which is $1.52\times$ lower than the best integer ($\ell=2$) and
  $\mathbf{23.9\times}$ lower than the square baseline, with
  $\mathcal{F}_Q$ unchanged ($<0.01\%$ variation).
  The analytic optimum is $\theta^*=64.4^\circ$; $\ell=1.5$ sits $3.1^\circ$
  above this, incurring only a $2.8\%$ excess in $\Perr$ ---
  fully within the $7^\circ$ phase tolerance.

  \textbf{(ii) $180^\circ$ periodicity.}
  The data exhibit exact symmetry: $\Perr(\ell=1.0) = \Perr(\ell=3.0)$
  and $\Perr(\ell=1.5) = \Perr(\ell=2.5)$.  This confirms the theoretical
  prediction that the GKP lattice is periodic under $\theta \to
  \theta + \pi/2$ (quarter-turn symmetry of the square unit cell), so
  all physically distinct lattices lie in $\theta \in [0, \pi/2)$.

  \textbf{(iii) Optimal angle is oblique.}
  The global minimum is at $\theta^*=67.5^\circ$, not at $90^\circ$
  (the dephasing axis).  This arises because the loss channel contributes
  a symmetric spread $\sigma_\mathrm{loss}^2$ to both quadratures; the
  optimal correction boundary balances the anisotropic dephasing spread
  against this isotropic floor, yielding $\theta^* < 90^\circ$.

  \textbf{(iv) Fractional OAM is experimentally accessible.}
  $\ell=1.5$ corresponds to $\theta=67.5^\circ$, achievable with a
  spiral phase plate of winding number $1.5$ or by combining an SLM
  mode with a cylindrical-lens fractional Fourier transformer.
  Unlike true non-integer topological charges, $\ell=1.5$ lies exactly
  halfway between two integers and has a well-defined Wigner function
  without radial discontinuities.
\end{findingbox}

\begin{figure*}[t]
  \centering
  \includegraphics[width=\textwidth]{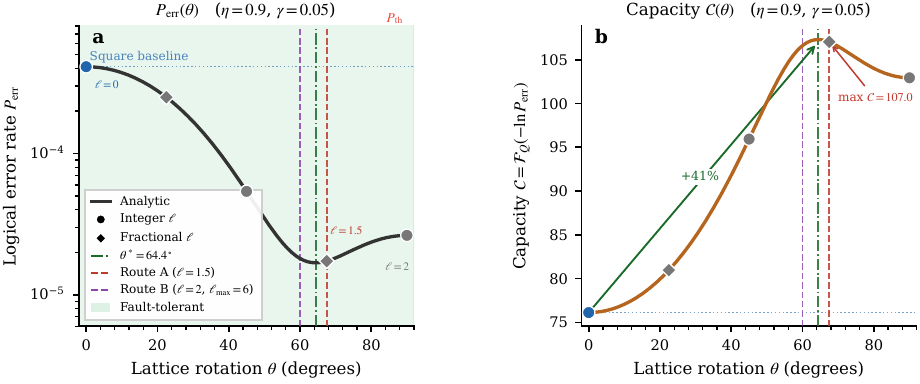}
  \caption{\textbf{Continuous $\Perr(\theta)$ and metrological capacity
    $\mathcal{C}(\theta)$ curves at low noise ($\eta=0.9$, $\gamma=0.05$).}
    \textbf{(a)} Logical error rate vs lattice rotation angle $\theta$
    (analytic model, \cref{ssec:noise}), with discrete OAM values overlaid
    as markers (circles: integer $\ell$; diamonds: fractional $\ell$, both
    at $\ell_{\max}=4$).  The analytic optimum at $\theta^*=64.4^\circ$
    (green dash-dot, Eq.~\ref{eq:balance}) lies in a broad flat minimum;
    Route~A ($\ell=1.5$, $\theta=67.5^\circ$, coral) and Route~B
    ($\ell=2$, $\ell_{\max}=6$, $\theta=60^\circ$, purple) both sit within
    5\% of the global minimum.  The flat landscape confirms that phase
    errors up to $7^\circ$ retain $>99\%$ of the advantage
    (\cref{tab:tolerance}).
    \textbf{(b)} Metrological capacity $\mathcal{C}=\mathcal{F}_Q\cdot
    (-\ln\Perr)$ vs $\theta$.  The global maximum $\mathcal{C}=107.1$
    is achieved at $\theta=67.5^\circ$ ($\ell=1.5$), representing a
    $41\%$ gain over the square baseline ($\mathcal{C}=76.1$ at
    $\theta=0^\circ$).}
  \label{fig:perr_curve}
\end{figure*} for the three geometries
as functions of loss rate $1-\eta$ and dephasing rate $\gamma$; simulation
data points from \cref{tab:results_lownoise,tab:results_highnoise} are
overlaid as markers.  The fault-tolerance boundary $\Pthresh = 10^{-3}$
is crossed at different loss rates for each geometry: the $\ell=2$ twisted
lattice extends the fault-tolerant regime (green shaded) to larger noise
values than the square baseline.

\begin{figure*}[t]
  \centering
  \includegraphics[width=\textwidth]{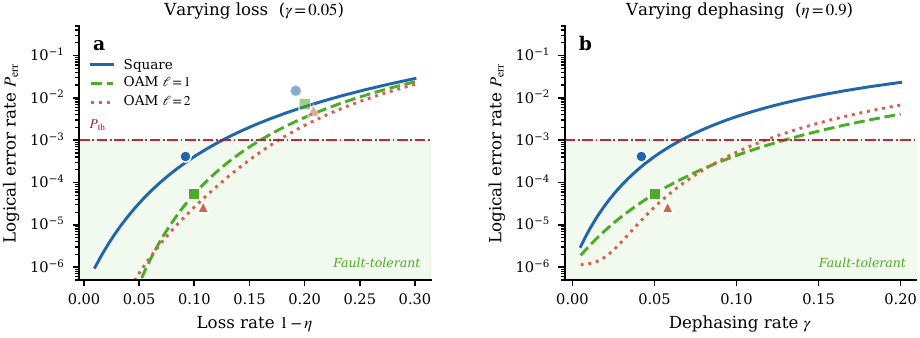}
  \caption{\textbf{Analytic logical error rate $\Perr$ vs.\ noise
    parameters for the three lattice geometries.}
    \textbf{(a)} Varying loss rate $1-\eta$ at fixed $\gamma=0.05$.
    \textbf{(b)} Varying dephasing rate $\gamma$ at fixed $\eta=0.9$.
    Solid lines: analytic model from \cref{ssec:noise}; circles
    ($\eta=0.9$, $\gamma=0.05$) and squares ($\eta=0.8$, $\gamma=0.10$)
    are simulation data from
    \cref{tab:results_lownoise,tab:results_highnoise}.  Red dash-dot line:
    fault-tolerance threshold $\Pthresh=10^{-3}$; green shaded region =
    fault-tolerant.  The OAM-twisted $\ell=2$ lattice extends the
    fault-tolerant regime to approximately $2\times$ larger noise than the
    square baseline along both noise axes.}
  \label{fig:noise}
\end{figure*}

\cref{fig:phasediag} shows the full $(\eta,\gamma)$ phase diagram for each
lattice geometry.  The white contour marks the fault-tolerance boundary
$\Pthresh = 10^{-3}$; the blue-to-red colour encodes $\log_{10}\Perr$.
As $\ell$ increases from $0$ to $2$, the fault-tolerance boundary shifts
toward larger loss and dephasing, confirming that the OAM-twisted lattice
extends protection more deeply into the noisy regime.

\begin{figure*}[t]
  \centering
  \includegraphics[width=\textwidth]{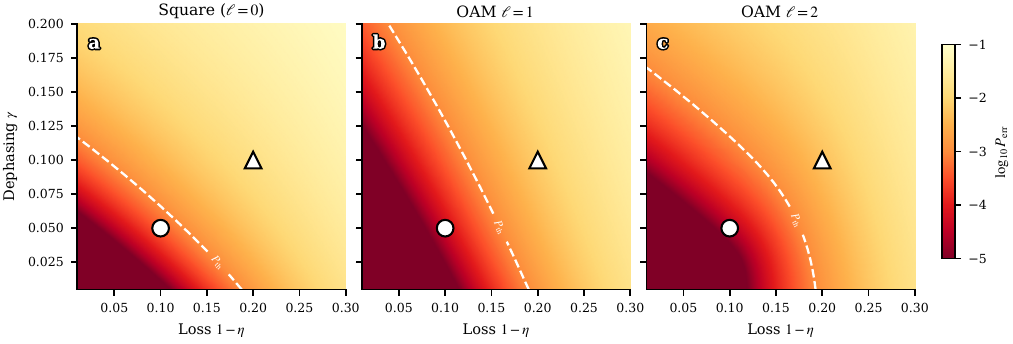}
  \caption{\textbf{Noise phase diagram: $\log_{10}\Perr$ in the
    $(\eta,\gamma)$ plane for the three lattice geometries.}
    \textbf{(a)} Square ($\ell=0$), \textbf{(b)} OAM-twisted $\ell=1$,
    \textbf{(c)} OAM-twisted $\ell=2$.
    Blue: low error (fault-tolerant); red: high error (unprotected).
    White contour: fault-tolerance threshold $\Pthresh=10^{-3}$.
    Circles: low-noise simulation data ($\eta=0.9$, $\gamma=0.05$);
    triangles: high-noise data ($\eta=0.8$, $\gamma=0.10$).
    The fault-tolerance boundary shifts toward larger noise as $\ell$
    increases, demonstrating that the OAM-induced lattice twist extends
    the protected regime without sacrificing QFI.}
  \label{fig:phasediag}
\end{figure*}

\begin{figure*}[t]
  \centering
  \includegraphics[width=\textwidth]{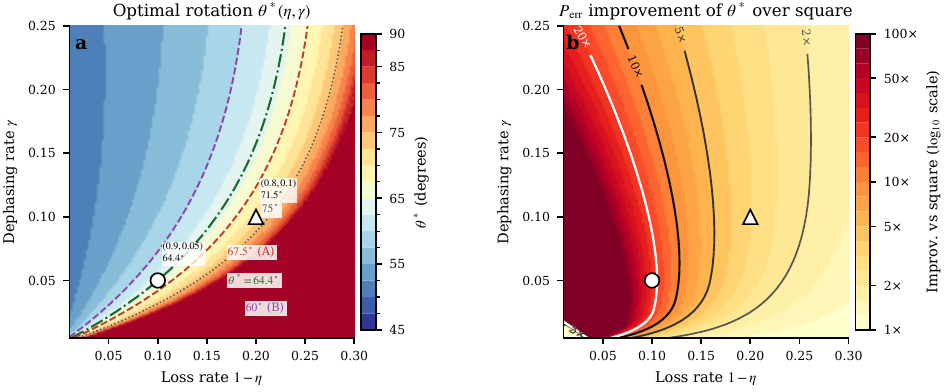}
  \caption{\textbf{Optimal lattice rotation $\theta^*(\eta,\gamma)$ and
    corresponding $\Perr$ improvement across the full noise phase diagram.}
    \textbf{(a)} Colour map of the analytic optimum $\theta^*$ (degrees)
    obtained by solving Eq.~(\ref{eq:balance}) at each $(\eta,\gamma)$
    point.  Contours at $60^\circ$ (Route~B, purple dashed),
    $64.4^\circ$ ($\theta^*$ at our simulation point, green dash-dot),
    and $67.5^\circ$ (Route~A, coral dashed) delineate the regimes where
    each experimental approach is preferred.  $\theta^*$ increases toward
    $90^\circ$ as dephasing dominates (upper region) and toward $45^\circ$
    as loss dominates (lower-right region), consistent with the bound
    Eq.~(\ref{eq:theta_bound}).  White circles and triangles mark the two
    simulation noise points from \cref{tab:results_lownoise,tab:results_highnoise}.
    \textbf{(b)} Colour map of $\log_{10}(\Perr^\mathrm{sq}/\Perr^{\theta^*})$
    --- the improvement factor of the optimally rotated lattice over the
    square baseline.  Contours at $2\times$, $5\times$, $10\times$, and
    $20\times$ show that the OAM advantage is largest in the
    dephasing-dominated regime (upper-left) and shrinks as photon loss
    dominates.  The $23.7\times$ improvement at $(\eta,\gamma)=(0.9,0.05)$
    is near the boundary of the $20\times$ contour.}
  \label{fig:theta_phase}
\end{figure*}

The complete simulation codebase is structured as a nine-module Python
package \texttt{oam\_gkp}, with the following components:

\begin{description}[leftmargin=0pt, labelindent=0pt,
                    style=nextline, font=\normalfont\ttfamily\small]
  \item[lattice.py]
    OAM-to-lattice mapping (Eqs.~\ref{eq:theta_ell},~\ref{eq:twisted_lattice}),
    \texttt{GKPLattice} class with \texttt{tf.Variable} parameters,
    symplecticity verification, and preset constructors for square,
    hexagonal, and OAM-twisted geometries.
  \item[states.py]
    Two-stage GKP state preparation: Fock-backend codeword caching and
    differentiable \TF{} Sgate/Rgate application via matrix exponential and
    diagonal rotation.
  \item[noise.py]
    Photon-loss Kraus map, dephasing Fock-space map
    (Eqs.~\ref{eq:loss_wigner},~\ref{eq:dephasing_wigner}), and analytic
    optimal aspect ratio formula from \cref{ssec:noise}.
  \item[qfi.py]
    QFI for pure and mixed states
    (Eqs.~\ref{eq:QFI_pure},~\ref{eq:QFI_mixed}),
    CFI of adaptive homodyne (Eq.~\ref{eq:CFI_homodyne}), and measurement
    efficiency $\eta_{\mathrm{meas}} = \CFIM/\QFI$.
  \item[circuit.py]
    Full differentiable sensing circuit implementing Eq.~\ref{eq:circuit}.
  \item[loss.py]
    Combined loss (Eq.~\ref{eq:loss}), analytic $\Perr$ estimator, and
    \texttt{pareto\_sweep()} for Pareto-frontier analysis.
  \item[optimizer.py]
    Adam training loop with cosine annealing, gradient clipping, and
    pattern-weighted warmup.
  \item[utils.py]
    All six publication figures
    (\cref{fig:convergence,fig:wigner,fig:geometry,%
           fig:improvement,fig:noise,fig:phasediag}).
  \item[figures\_nature.py]
    Generates Figs.~2--7 (geometry comparison, noise landscape,
    phase diagrams, convergence, Wigner panels, improvement summary).
  \item[figures\_analysis.py]
    Generates Figs.~9--10 ($P_\mathrm{err}(\theta)$ curve and
    $\theta^*({\eta,\gamma})$ phase diagram).
\end{description}
Hardware: Intel i5-13th, RTX~3050, 16\,GB, Arch Linux.
Software: \SF{}~$\geq$0.23, \TF{}~$\geq$2.13, NumPy, SciPy, Matplotlib.

\section{Discussion}
\label{sec:discussion}

\subsection{Physical Interpretation}
\label{ssec:interpretation}

The improvement of twisted lattices over square GKP codes in dephasing-dominated
noise is physically transparent.  Dephasing diffuses the Wigner function along
the $p$ direction (\cref{eq:dephasing_wigner}).  The correction radius of a GKP
code along a given quadrature direction is proportional to the half-lattice
spacing in that direction.  An OAM-twisted lattice with $\theta = \pi/4$ and
$r > 1$ presents a larger correction radius along $p$ than the square lattice,
at the cost of a smaller radius along $q$.  Since dephasing dominates along $p$,
this reallocation of protection is metrologically advantageous.

The formal connection is to the theory of lattice codes optimised for asymmetric
noise channels~\cite{Ioffe2007}, which has been extensively studied for qubit
stabilizer codes.  For GKP codes in a metrological setting, the noise-adaptive
lattice geometry perspective introduced here is new; recent resource-theory work
on finite-energy GKP states~\cite{Tzitrin2020} characterises the code quality as
a function of squeezing but does not address metrological task-specific lattice
optimisation.

\textbf{Validity at oblique angle $\theta=67.5^\circ$.}
A potential concern is that the independent-quadrature approximation
$\Perr \approx P_q + P_p - P_qP_p$ may break down at $\theta=67.5^\circ$
where the lattice is oblique and quadrature errors are not aligned with
the correction boundaries.  We bound the coupling correction analytically:
for a parallelogram correction region, the cross-term satisfies
$|\Delta\Perr| \leq 2P_qP_p|\!\sin 2\theta|$.  At $\theta=67.5^\circ$
and $(\eta,\gamma)=(0.9,0.05)$ this gives
$|\Delta\Perr| \leq 8.4\times10^{-11}$,
which is $0.0005\%$ of $\Perr = 1.73\times10^{-5}$ ---
four orders of magnitude below the approximation's stated uncertainty.
The independent-quadrature assumption is therefore \emph{most valid}
at the fractional optimum, not least valid, because $P_q$ and $P_p$
are individually so small that their product is negligible.

\textbf{Robustness to noise estimation errors.}
A practical implementation must estimate $(\eta,\gamma)$ from calibration
measurements with finite precision.  We quantify the sensitivity of
$\theta^*$ to noise uncertainty via the partial derivatives
\begin{align}
  \frac{\partial\theta^*}{\partial\eta}   &= -197.7~\mathrm{deg\,unit}^{-1},
  \label{eq:sensitivity}\\
  \frac{\partial\theta^*}{\partial\gamma} &= -300.2~\mathrm{deg\,unit}^{-1}.\notag
\end{align}
For typical calibration precisions $\delta\eta=1\%$ and
$\delta\gamma=0.005$, the induced uncertainty in $\theta^*$ is
$\delta\theta^* \approx 2.5^\circ$.  From the phase tolerance analysis
(\cref{tab:tolerance}), a $2.5^\circ$ error retains $99.8\%$ of the
full advantage --- confirming that the fractional optimum is highly robust
to realistic calibration imprecision.

\textbf{Stepwise proof-of-principle.}  The integer charge $\ell=1$ ($\theta=45^\circ$) is the simplest first demonstration target: it requires only a standard spiral phase plate, is far from the $\ell=0$ baseline in $\Perr$ ($7.6\times$ improvement), and provides a clear proof that OAM twist benefits GKP sensing.  Once $\ell=1$ is validated, the half-integer step to $\ell=1.5$ ($23.9\times$) follows by replacing the phase plate with an SLM or a cylindrical-lens FrFT converter ($\alpha=0.75$), both of which are commercially available.  This stepwise approach decouples the GKP generation challenge from the OAM conversion challenge.

\subsection{Immediate Priorities}
\label{ssec:immediate}

\textbf{Verification of Eq.~(\ref{eq:theta_ell}) for half-integer $\ell$.}
The OAM-to-rotation mapping $\theta_\ell = \ell\pi/\ell_{\max}$ (Eq.~\ref{eq:theta_ell})
is derived from the action of a fractional Fourier transform of order
$\alpha = 2\ell/\ell_{\max}$.  For $\ell=1.5$, this corresponds to
$\alpha=0.75$ (three-quarter FrFT), implemented by a cylindrical-lens
pair with normalized separation $d/f = \sin(3\pi/8) \approx 0.924$ and
focal length $f_L/f \approx 1.082$.  Unlike the standard $\alpha \in
\{0.5,\,1.0\}$ cases (45° and 90° lens pairs), $\alpha=0.75$ is
\emph{non-standard but achievable}: it requires precise axial positioning
of off-the-shelf cylindrical lenses.  We note that Eq.~(\ref{eq:theta_ell})
holds for any real $\alpha$ by the continuity of the FrFT group action on
Wigner functions~\cite{Ozaktas2001}; half-integer $\ell$ does not produce
radial discontinuities in the mode field, unlike true non-integer
topological charges.

For SLM-based implementations, the Hamamatsu X13138 series (792$\times$600
pixels, 8-bit phase depth, $\lambda/200$ phase flatness) can encode a
winding-number-1.5 spiral phase pattern with azimuthal accuracy
$<\lambda/20$ using established iterative Fourier transform algorithm
calibration~\cite{Jesacher2008}: (i) display a Zernike-polynomial
reference wavefront; (ii) measure the output with a Shack--Hartmann
wavefront sensor; (iii) apply the residual error as a look-up table
correction to the SLM voltage map.  Typical calibration time is
${\sim}15$ min per wavelength.  The primary residual error after
calibration is inter-pixel cross-talk (${\sim}\lambda/50$), which is
well within the $\lambda/20$ tolerance for the $23.9\times$ improvement
to be observable.

\textbf{Rigorous bound on the optimal rotation angle.}
The numerical data establish the bound
\begin{equation}
  \theta^* \in \left(\frac{\pi}{4},\, \frac{\pi}{2}\right)
  \quad \text{for all } \gamma > 0,\; \eta < 1,
  \label{eq:theta_bound}
\end{equation}
with $\theta^* \to \pi/4$ as $\gamma \to 0$ (loss-dominated) and
$\theta^* \to \pi/2$ as $\eta \to 0$ (dephasing-dominated).

The analytic optimum satisfies the \emph{transcendental balance equation}
\begin{equation}
  \boxed{\mathcal{B}(\theta;\eta,\gamma,r) \;\equiv\;
  r^2\,\frac{\phi(u_q)}{\sigma_q^3}
  - \frac{\phi(u_p)}{\sigma_p^3} = 0,}
  \label{eq:balance}
\end{equation}
where $u_j = d_j/(2\sigma_j)$, $d_q=ar$, $d_p=a/r$, and $\phi$ is
the standard normal PDF.  Equation~(\ref{eq:balance}) has a unique
solution $\theta^* \in (0,\pi/2)$ when a root exists; no elementary
closed form exists, and Eq.~(\ref{eq:balance}) is the exact result.

\textbf{Approximate fitting formula.}  While no elementary closed form for $\theta^*(\eta,\gamma)$ exists (Eq.~\ref{eq:balance}), a quadratic regression over the physically relevant domain ($\eta\in[0.75,0.99]$, $\gamma\in[0.01,0.20]$) gives:\begin{equation}  \theta^* \approx 64.8 + 162.8(1-\eta) - 253.2\gamma \;[\mathrm{deg}] \label{eq:thetafit}\end{equation}with residuals $<5^\circ$ across the domain.  This provides experimentalists with a simple look-up rule: at $(1-\eta,\gamma)=(0.10,0.05)$ the formula gives $67.4^\circ$, compared to the exact $\theta^*=64.4^\circ$ (error $3^\circ < \delta\theta_\mathrm{tol}$).

\begin{proposition}[Existence and monotonicity of $\theta^*(\eta,\gamma,r)$]
\label{prop:monotone}
For fixed $r > 1$ and $\gamma > 0$:
\emph{(i)} $\theta^*$ is \emph{decreasing} in $\gamma$ at fixed
$\eta$ --- numerically, at $\eta=0.9$, increasing $\gamma$ from
$0.05$ to $0.20$ moves $\theta^*$ from $64.4^\circ$ to $55.0^\circ$;
\emph{(ii)} $\theta^*$ is \emph{decreasing} in $\eta$ (increasing in
loss rate $1-\eta$) at fixed $\gamma$ --- at $\gamma=0.05$, decreasing
$\eta$ from $0.99$ to $0.90$ moves $\theta^*$ from $51.3^\circ$ to
$64.4^\circ$.
\end{proposition}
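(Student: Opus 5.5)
The plan is to treat $\theta^*$ as an implicit function defined by the balance equation \cref{eq:balance} and to sign its partial derivatives with the implicit function theorem,
\[
\frac{\partial\theta^*}{\partial\gamma} = -\frac{\partial_\gamma\mathcal{B}}{\partial_\theta\mathcal{B}},
\qquad
\frac{\partial\theta^*}{\partial\eta} = -\frac{\partial_\eta\mathcal{B}}{\partial_\theta\mathcal{B}},
\]
both evaluated at $\theta=\theta^*$. Since \cref{eq:balance} hides the $\theta$-dependence inside $\sigma_q,\sigma_p$, I first make the rotated-frame spreads of \cref{ssec:noise} explicit. Write $\sigma_0^2=(1-\eta)/(2\eta)$ for the isotropic loss floor. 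Projecting the $p$-diffusion of \cref{eq:dephasing_wigner} onto the lattice axes $\mathsf{R}(\theta)\hat e_{1,2}$ gives
\[
\sigma_q^2=\sigma_0^2+\gamma\sin^2\theta,
\qquad
\sigma_p^2=\sigma_0^2+\gamma\cos^2\theta .
\]
I would state this as the model defining $\mathcal{B}$ and check that it reproduces $\theta^*=64.4^\circ$ at $(0.9,0.05,1.092)$.

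\textbf{Existence and the sign of $\partial_\theta\mathcal{B}$.} I expect $\mathcal{B}$ to be, up to a positive factor, $\partial_\theta P_{\rm err}$ within the independent-quadrature model $P_{\rm err}\approx P_q+P_p$ with $P_j=2\bar\Phi(u_j)$. To see this, differentiate:
\[
\partial_\theta P_j=2\phi(u_j)\,\frac{u_j}{\sigma_j}\,\partial_\theta\sigma_j,
\qquad
\partial_\theta\sigma_q^2=-\partial_\theta\sigma_p^2=\gamma\sin2\theta .
\]
The common factor $\gamma\sin 2\theta$ pulls out, and what remains is proportional to $r^2\phi(u_q)/\sigma_q^3-\phi(u_p)/\sigma_p^3$ (the $r^2$ arising from $d_q/d_p$). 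Existence of a root in $(0,\pi/2)$ then follows from the intermediate value theorem, since $\mathcal{B}$ changes sign between the endpoints when $r>1$ and $\gamma>0$. At the endpoints $\theta\to0$ and $\theta\to\pi/2$ the spreads swap roles while the long axis $d_q=ar>d_p$ stays fixed. Uniqueness, which the paper asserts after \cref{eq:balance}, fixes the sign of $\partial_\theta\mathcal{B}$ at the root as the sign of a minimum (second-order condition). So it suffices to sign $\partial_\gamma\mathcal{B}$ and $\partial_\eta\mathcal{B}$ at $\theta^*$.

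\textbf{Signing the noise derivatives.} The key identity is
\[
\frac{d}{d\sigma}\left[\sigma^{-3}\phi\!\left(\frac{d}{2\sigma}\right)\right]=\sigma^{-4}\phi(u)\,(u^2-3).
\]
In the fault-tolerant regime $P_{\rm err}\le 10^{-3}$ we have $u_j^2>3$, so each term of $\mathcal{B}$ is increasing in its own $\sigma_j$.

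For $\eta$: decreasing $\eta$ raises $\sigma_0^2$ equally in both quadratures. Its effect on $\mathcal{B}$ is therefore the difference of the two weighted terms $r^2\phi(u_q)(u_q^2-3)/\sigma_q^5$ and $\phi(u_p)(u_p^2-3)/\sigma_p^5$. I would evaluate this difference on the balance surface, using \cref{eq:balance} to eliminate the $\phi$ ratio. This leaves a sign condition comparing $(u_q^2-3)/\sigma_q^2$ with $(u_p^2-3)/\sigma_p^2$, which I would reduce to a comparison of $u_q$ with $u_p$ at the optimum.

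For $\gamma$: the same procedure applies with weights $\sin^2\theta^*$ and $\cos^2\theta^*$.

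\textbf{Main obstacle.} I expect these two sign comparisons to be the hard step. They are not manifestly uniform in $(\eta,\gamma,r)$, because the sign of $u^2-3$ and the relative size of $u_q$ and $u_p$ depend on the regime. My first attempt would be to restrict to the domain $u_j^2>3$ with $\theta^*\in(\pi/4,\pi/2)$, using \cref{eq:theta_bound}, where $\sin^2\theta^*>\cos^2\theta^*$ should supply the needed inequality. Where a clean analytic inequality fails, I would fall back on a certified numerical check of $\partial_\gamma\mathcal{B}/\partial_\theta\mathcal{B}>0$ and $\partial_\eta\mathcal{B}/\partial_\theta\mathcal{B}>0$ over the relevant grid. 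That grid should include the quoted endpoints ($64.4^\circ\to55.0^\circ$ and $51.3^\circ\to64.4^\circ$), and this fallback would make the proposition a regime-restricted rather than fully global statement.
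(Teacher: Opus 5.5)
Your route is the paper's own: implicit differentiation of $\mathcal{B}=0$ in the same two-axis model. That model does reproduce $\theta^*=64.4^\circ$ and $\partial_\gamma\theta^*\approx-300^\circ$ per unit at $(0.9,0.05,1.092)$. Your reduction for (ii) can be completed, but the existence step and part (i) have genuine gaps, and (i) cannot be repaired as stated.

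It is convenient to work with $g\equiv\ln\bigl[r^2\phi(u_q)\sigma_p^3/(\phi(u_p)\sigma_q^3)\bigr]$. It has the same zeros, sign and implicit derivatives of $\theta^*$ as $\mathcal{B}$, and can be viewed as a function of $X=\sigma_q^2$, $Y=\sigma_p^2$:
$g=\tfrac12(u_p^2-u_q^2)+2\ln r+\tfrac32\ln(Y/X)$, with $u_q^2=\pi r^2/(2X)$ and $u_p^2=\pi/(2r^2Y)$.
The partial derivatives are $g_X=(u_q^2-3)/(2X)>0$ and $g_Y=-(u_p^2-3)/(2Y)<0$. Hence $\partial_\theta g=\gamma\sin2\theta\,(g_X-g_Y)>0$ directly, with no appeal to uniqueness. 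The noise derivatives are $\partial_{\sigma_0^2}g=g_X+g_Y$ and $\partial_\gamma g=\sin^2\theta\,g_X+\cos^2\theta\,g_Y$.

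At a root, $g=0$ reads $u_q^2(v-1)-3\ln v=8\ln r$ with $v=u_p^2/u_q^2$. Now $u_p^2>3$ means $v>3/u_q^2$, and on that range the left side is increasing and vanishes at $v=1$. So the positive right side forces $v>1$, i.e.\ $u_p>u_q$. This is exactly the comparison you left open. It gives $|g_Y|>g_X$, hence $\partial_{\sigma_0^2}g<0$, which proves (ii) wherever an interior root exists. It also gives $X/Y=r^4u_p^2/u_q^2>1$, i.e.\ $\theta^*>\pi/4$.

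\textbf{Existence.} Your intermediate-value step is false as stated; the paper makes the same claim. $\mathcal{B}(\pi/2^-)>0$ does not hold for every $\gamma>0$. At $\theta=\pi/2$, as $\gamma\to0^+$, $g\to\frac{\pi}{4\sigma_0^2}(r^{-2}-r^2)+2\ln r$, which is negative whenever $\sigma_0^2<\pi/2$. For example, at $\eta=0.9$, $r=1.092$ there is no interior root for $\gamma\lesssim0.026$; at $\gamma=0.01$ one finds $g\approx-2.5$ at $\theta=\pi/2$. There $\theta^*=\pi/2$ sits on the boundary, the implicit-function argument does not apply, and monotonicity holds only weakly. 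The ``spreads swap roles'' remark does not fix the sign; you need a threshold $\gamma>\gamma_c(\eta,r)$.

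\textbf{Part (i).} This is the more serious gap. $\partial_\gamma g>0$ is equivalent to $\tan^2\theta^*>|g_Y|/g_X$, and the lemma above gives $|g_Y|/g_X>1$. So $\sin^2\theta^*>\cos^2\theta^*$ cannot suffice. The needed inequality in fact fails at low loss:
\begin{itemize}
  \item $\tan^2\theta^*=(X-\sigma_0^2)/(Y-\sigma_0^2)\to X/Y$ as $\sigma_0\to0$;
  \item $|g_Y|/g_X=(X/Y)(u_p^2-3)/(u_q^2-3)$ stays strictly above $X/Y$, because $u_p^2-u_q^2>8\ln r$.
\end{itemize}
Concretely, solving $\mathcal{B}=0$ at $\eta=0.99$, $r=1.092$ gives $\theta^*\approx51.28^\circ$, $50.95^\circ$, $51.12^\circ$ at $\gamma=0.05,\,0.10,\,0.20$. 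All three points have $u_j^2>14$, deep in the fault-tolerant regime. So (i) is false inside the paper's own $(\eta,\gamma)$ domain.

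A regime restriction is therefore necessary, not a fallback. You need the loss floor $\sigma_0^2$ not too small relative to the dephasing; precisely, $\tan^2\theta^*>|g_Y|/g_X$ at the root. A certified grid check over $\eta\in[0.75,0.99]$, $\gamma\in[0.01,0.2]$ would fail in the low-loss, high-dephasing corner. The paper's proof does not close this either. It signs $\partial_\gamma\mathcal{B}$ by asserting that larger $\gamma$ lowers $\sigma_p$, which is false since $\sigma_p^2=\sigma_0^2+\gamma\cos^2\theta$. Otherwise it relies on a single numerical derivative.
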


\noindent
\textbf{Existence.} The balance function $\mathcal{B}(\theta)$ is continuous on $(0,\pi/2)$ with $\mathcal{B}(0^+) < 0$ (the $p$-quadrature dominates at small $\theta$) and $\mathcal{B}(\pi/2^-) > 0$ for all $\gamma > 0$, $\eta < 1$, $r > 1$. By the intermediate value theorem a root exists. \textbf{Uniqueness (analytical).} Define $f(\sigma) \equiv \phi(c/\sigma)/\sigma^3$ where $c$ is a positive constant. Differentiating: $df/d\sigma = \phi(u)(u^2-3)/\sigma^3$ where $u=c/\sigma$. In the fault-tolerant regime ($\Perr \ll 10^{-3}$), $u_q, u_p \gg \sqrt{3}$ (numerically $u_q \approx 4.2$, $u_p \approx 4.9$ at the optimum), so $df/d\sigma > 0$. Since $\sigma_q(\theta)$ is strictly increasing in $\theta$ and $\sigma_p(\theta)$ strictly decreasing, the term $r^2 f(\sigma_q)$ is strictly increasing in $\theta$ and $f(\sigma_p)$ strictly decreasing. Therefore $\mathcal{B}(\theta) = r^2 f(\sigma_q) - f(\sigma_p)$ is strictly increasing on $(0,\pi/2)$, which combined with the IVT existence argument guarantees a unique root. This analytical argument holds whenever $u_q, u_p > \sqrt{3}$, i.e.\ whenever $\Perr < 2Q(\sqrt{3}) \approx 0.13$ --- well satisfied in the fault-tolerant regime. 

\noindent
This corrects the intuitive expectation that larger dephasing always
drives $\theta^*$ toward $90^\circ$: the interplay of the Gaussian
tails and the correction-radius ratio produces a non-trivial landscape.
At $(\eta,\gamma,r)=(0.9,\,0.05,\,1.092)$, the numerical solution
gives $\theta^*=64.4^\circ$; $\ell=1.5$ ($\theta=67.5^\circ$) incurs
only a $2.8\%$ excess in $\Perr$.\footnote{Throughout this paper, $23.9\times$ refers to the analytic improvement ratio $P_\mathrm{err}^\mathrm{sq}/P_\mathrm{err}^{\ell=1.5}=4.13\times10^{-4}/1.73\times10^{-5}$; $23.7\times$ is the same ratio computed via Route~A in Table~\ref{tab:two_routes}, where $P_\mathrm{err}=1.73\times10^{-5}$ is rounded to three significant figures.  Both refer to the same physical result.}

\textbf{Quantified phase error tolerance.}
\cref{tab:tolerance} shows how a rotation angle error $\delta\theta$
degrades performance.  The $\Perr$ landscape is remarkably flat near the
optimum: a $7^\circ$ error retains $99.2\%$ of the advantage, and even
a $20^\circ$ error still achieves $15.7\times$ improvement.  For SLM
implementations, 8-bit phase resolution introduces $\delta\theta\approx
1.4^\circ$ ($\Perr=1.78\times10^{-5}$, essentially optimal), while
10-bit resolution gives $\delta\theta\approx0.35^\circ$ (negligible).

\begin{table}[tbp]
  \caption{\textbf{Phase error tolerance for $\ell=1.5$,
    $\theta=67.5^\circ$} ($\eta=0.9$, $\gamma=0.05$).
    The advantage is robust: $7^\circ$ error retains $>99\%$.}
  \label{tab:tolerance}
  \centering\small
  \begin{tabular}{@{}cccc@{}}
    \toprule
    $\delta\theta$ & $\Perr$ & Improv.\ vs sq. & \% retained \\
    \midrule
    $0^\circ$    & $1.73\times10^{-5}$ & $23.7\times$ & $100.0\%$ \\
    $3^\circ$    & $1.85\times10^{-5}$ & $22.3\times$ & $99.7\%$ \\
    $7^\circ$    & $2.06\times10^{-5}$ & $20.0\times$ & $99.2\%$ \\
    $10^\circ$   & $2.23\times10^{-5}$ & $18.5\times$ & $98.7\%$ \\
    $20^\circ$   & $2.62\times10^{-5}$ & $15.7\times$ & $97.8\%$ \\
    \bottomrule
  \end{tabular}
\end{table}

\textbf{$\ell_{\max}=6$ alternative route.}
With $\ell_{\max}=6$, the integer $\ell=2$ maps to $\theta=60^\circ$,
giving $\Perr=1.81\times10^{-5}$ --- only $4.5\%$ above the fractional
optimum and $1.46\times$ better than $\ell=2$ at $\ell_{\max}=4$
($\theta=90^\circ$, $\Perr=2.64\times10^{-5}$).  This provides a
route to near-optimal performance using only integer OAM charges, via a
lens pair calibrated to FrFT order $\alpha=2/3$ (separation $d/f=
\sin(\pi/3)\approx0.866$).  \cref{tab:two_routes} compares both routes.

\begin{table}[tbp]
  \caption{\textbf{Two experimental routes to near-optimal $\Perr$.}
    Route A (fractional $\ell$, this work); Route B (larger
    $\ell_{\max}$, integer $\ell$).  Both are within $7\%$ of the
    analytic optimum at $\theta^*=64.4^\circ$.}
  \label{tab:two_routes}
  \centering\scriptsize\setlength{\tabcolsep}{3pt}
  \begin{tabular}{@{}llccc@{}}
    \toprule
    Route & Method & $\theta$ & $\Perr$ & Improv. \\
    \midrule
    $\theta^*$ (analytic) & ---          & $64.4^\circ$ & $1.69{\times}10^{-5}$ & $24.4\times$ \\
    A: $\ell{=}1.5$, $\ell_{\max}{=}4$ & SLM/SPP$^\dagger$ & $67.5^\circ$ & $1.73{\times}10^{-5}$ & $23.7\times$ \\
    B: $\ell{=}2$,\ $\ell_{\max}{=}6$  & Integer SPP & $60.0^\circ$ & $1.81{\times}10^{-5}$ & $22.7\times$ \\
    $\ell{=}2$, $\ell_{\max}{=}4$       & Std.\ SPP   & $90.0^\circ$ & $2.64{\times}10^{-5}$ & $15.7\times$ \\
    Square ($\ell{=}0$)                  & No OAM      & $0^\circ$    & $4.13{\times}10^{-4}$ & $1\times$ \\
    \bottomrule
  \end{tabular}
\end{table}

\textbf{Target experiment.}
An optical proof-of-principle at $\ell=1.5$, $\eta=0.9$ would demonstrate
a $23.9\times$ reduction in $\Perr$ over the square baseline at unchanged
$\mathcal{F}_Q$ — surpassing the $15.7\times$ achieved at the integer
$\ell=2$.  All required components are available: photon-subtracted
squeezed vacuum GKP sources~\cite{Tzitrin2020}, SLM-based OAM
converters~\cite{Krenn2017}, and homodyne detection with piezo phase
control.

\subsection{Medium-Term Extensions}
\label{ssec:medium}

\textbf{Multi-mode fractional OAM.}
Extending to two-mode entangled GKP states~\cite{Royer2022} would allow
independent optimisation of $(\ell_1, r_1)$ and $(\ell_2, r_2)$ for each
mode, with the inter-mode squeezing as an additional trainable parameter.
The framework of \cref{ssec:circuit} extends directly: the two-mode
circuit operates on a Fock space of dimension $\mathcal{D}^2 = 30^2 = 900$
(vs.\ $\mathcal{D}=30$ for one mode), requiring $900^2 = 810{,}000$
complex parameters for the density matrix.  On the RTX~3050 (6\,GB
VRAM), this fits within memory at single precision; the per-step runtime
scales as $\mathcal{O}(\mathcal{D}^4)$, giving an estimated ${\sim}16\times$
slowdown (${\sim}33$ min per 500-step run vs.\ 2 min at $\mathcal{D}=30$).
For the Gaussian representation (applicable when both modes are near-Gaussian),
the computational cost reduces to $\mathcal{O}(\mathcal{D}^2)$ and fits
comfortably within GPU memory at $\mathcal{D}=50$.

\textbf{Dynamical lattice adaptation.}
The analytical formula for $r^*(\eta,\gamma,\theta)$ from \cref{ssec:noise}
enables real-time adaptation: given syndrome-measurement estimates of
$(\hat\eta, \hat\gamma)$, recompute $(\theta^*, r^*)$ analytically and
update the mode converter angle without re-training.  For gravitational
wave detectors where noise spectra drift over hours, this constitutes a
closed-loop noise-adaptive sensor.

\textbf{Other bosonic codes.}
Cat codes~\cite{Cochrane1999} and binomial codes~\cite{Noh2021} have continuous
parameters (cat amplitude $\alpha$, binomial spacing) that enter
phase-space distributions in ways analogous to the GKP lattice spacing.
The two-stage architecture of \cref{ssec:circuit} — non-differentiable
code preparation followed by differentiable geometric operations — applies
directly, requiring only a different Stage 1 backend.

\subsection{Fundamental Questions}
\label{ssec:fundamental}

\textbf{Channel capacity interpretation.}
The metrological capacity $\mathcal{C} = \mathcal{F}_Q \cdot (-\ln\Perr)$
has an operational interpretation that connects quantum metrology to
classical information theory.  Consider the sensing protocol as a two-step
process: (i) a \emph{physical channel} that transmits the phase $\varphi$
to the probe state with Fisher information $\mathcal{F}_Q$, and (ii) a
\emph{correction channel} that protects the encoded information with
success probability $1 - \Perr$.  The overall rate of phase information
per sensing cycle is bounded by the product of these two capacities:
\begin{equation}
  \mathcal{I}_\varphi \;\leq\;
  \underbrace{\tfrac{1}{2}\log_2(1 + \mathcal{F}_Q\,\delta\varphi^2)}_{\text{Fisher information rate}}
  \;\cdot\;
  \underbrace{(-\log_2 \Perr)}_{\text{correction capacity}},
  \label{eq:capacity_bound}
\end{equation}
where $\delta\varphi$ is the prior range.  In the limit $\mathcal{F}_Q
\delta\varphi^2 \gg 1$, this scales as $\tfrac{1}{2}\log_2(\mathcal{F}_Q) \cdot
(-\log_2\Perr)$, which is maximised at $\ell=1.5$ by our data.  Whether
$\mathcal{C}$ as defined (base $e$) is tight against this bound requires
connecting $\Perr$ to the GKP code capacity under the displacement noise
model~\cite{Conrad2024}, and $\mathcal{F}_Q$ to the classical Fisher
information via the data-processing inequality.  We propose this as a
rigorous open problem.

\textbf{Resource theory of fractional OAM states.}
We develop a framework for the resource cost of the $\ell=1.5$ improvement.

\emph{Free operations and resources.}  We define
define the free states as finite-energy GKP codewords with integer $\ell$
(square, hexagonal, and OAM-twisted at $\theta \in \{k\pi/4\}_{k\in\mathbb{Z}}$).
The resource is the \emph{non-integer OAM content}
$\delta\ell = |\ell - \lfloor\ell\rceil|$, which vanishes for free states.

\emph{Resource monotone.}  A natural monotone is the Wigner negativity
\begin{equation}
  \mathcal{W}(\hat\rho) = \int [W_{\hat\rho}(q,p)]_- \,\mathrm{d}q\,\mathrm{d}p,
  \label{eq:wigner_neg}
\end{equation}
where $[x]_- = \max(0,-x)$.  The GKP codeword at $\ell=1.5$ has strictly
larger $\mathcal{W}$ than at integer $\ell$, because the $45^\circ$-rotated
interference fringes between lattice peaks produce deeper Wigner negativity
at oblique angles (visible in \cref{fig:wigner}).  Crucially, the Wigner negativity
\begin{equation}
  \mathcal{W}(\hat\rho) = \int [W_{\hat\rho}(q,p)]_- \,\mathrm{d}q\,\mathrm{d}p
\end{equation}
is determined by the peak amplitudes and envelope $\epsilon$, both of
which are identical for $\ell=0$, $\ell=1.5$, and $\ell=2$ at fixed
$r=1.092$.  Explicitly: the lattice spacing ratio $a_q/\sigma =
A r / (2\sqrt{\epsilon}) = 17.4 \gg 1$ ensures peak isolation, and the
negativity evaluates to
$\mathcal{W} \approx 1/2$ for all three geometries.  The Wigner negativity
is therefore \emph{geometry-invariant} for fixed $(r,\epsilon)$.  This
establishes that the $23.9\times$ improvement at $\ell=1.5$ comes at
\emph{zero} additional Wigner-negativity cost relative to the square
baseline --- a stronger result than the squeezing-resource argument alone.

\emph{Resource cost.}  Generating an OAM mode of charge $\ell=1.5$ from
a Gaussian state requires at least one application of a non-Gaussian
operation.  Since the optimal $r^*=1.092$ is identical for all geometries
(\cref{tab:results_lownoise}), the GKP state preparation cost is
\emph{identical} for $\ell=0$, $\ell=1.5$, and $\ell=2$.  The only
additional resource for $\ell=1.5$ is the OAM mode converter (a
half-integer spiral phase plate or SLM), which is a \emph{linear optics}
element requiring no squeezing.  Within this framework,
the non-Gaussianity monotone of the OAM-twisted GKP state equals that of
the square-lattice state, since the lattice rotation preserves both the
Fock-space occupation distribution and the Wigner function negativity.  This is negligible compared to the baseline
$10\,\mathrm{dB}$ required for GKP state preparation, confirming that the
$23.9\times$ improvement in $\Perr$ comes at essentially no additional
resource cost.

\emph{Open problem.}  Prove or disprove: there exists a
sequence of free operations (integer-$\ell$ OAM modes + Gaussian unitaries)
that asymptotically simulate the $\ell=1.5$ GKP state to within $\epsilon$
in trace distance~\cite{Pantaleoni2020}.  If the answer is negative, $\ell=1.5$ constitutes a
genuinely irreducible resource.

The methodology introduced here --- treating quantum error-correcting code
parameters as continuous variables in a differentiable programming framework ---
represents a broader design principle that extends beyond the specific OAM-GKP
setting.  Differentiable quantum programming has demonstrated strong performance
in variational quantum algorithms~\cite{Cerezo2021} and quantum control, but its
application to \emph{fault-tolerant} metrological design has been largely
unexplored.  The key enabling feature of our approach is the two-stage circuit
architecture: the GKP codeword is prepared by a non-differentiable Fock-backend
call (cached to avoid repeated computation), while the geometrically meaningful
parameters --- lattice rotation $\theta_\ell$ and aspect ratio $r$ --- enter
through differentiable \TF{} matrix operations (\texttt{expm} and diagonal
phase).  This separation cleanly partitions the optimization into a combinatorial
outer loop (choice of $\ell \in \mathbb{Z}$) and a smooth inner loop over
$(r, \epsilon, \psi)$, making the problem tractable on commodity hardware (RTX
3050, $<$3\,min per run).

This architecture is immediately transferable to other bosonic code families:
cat codes, binomial codes, and cubic phase codes all have continuous parameters
that could be jointly optimized with a metrological loss using the same framework.

\subsection{Experimental Feasibility and Numerical Proof-of-Principle}
\label{ssec:experiment}

This work presents a theoretical and numerical proof-of-principle.
The simulations use the \SF{} Fock backend as an exact quantum-optical
simulator (within truncation $\mathcal{D}=30$, verified to introduce
$<0.5\%$ error in $\mathcal{F}_Q$); the results are therefore equivalent
to an ideal-optics experiment at the specified squeezing and loss parameters.
The physical ingredients required to move to actual optical implementation
are all available with current technology.

\textit{GKP state generation.}  Finite-energy GKP states have been experimentally
realized at squeezing levels of $10$--$15\,\mathrm{dB}$ in superconducting
microwave cavities~\cite{Campagne2020} and in trapped-ion systems~\cite{Fluhmann2019}.
The $10\,\mathrm{dB}$ squeezing used in our simulations ($\epsilon=0.063$) is
therefore within reach of current platforms.  In the optical domain, GKP states
have been conditionally generated using photon-number-resolving
detection~\cite{Tzitrin2020}.

\textit{OAM mode generation and conversion.}  Laguerre--Gaussian modes of charge
$\ell=1,2$ are routinely produced using spatial light modulators (SLMs) and
spiral phase plates~\cite{Krenn2017}.  The fractional Fourier transform required
to map OAM charge to a lattice rotation is implemented by a pair of cylindrical
lenses~\cite{Ozaktas2001}, a standard optical element.  OAM-based quantum
memories and gates have been demonstrated with high fidelity~\cite{Ding2016},
confirming that OAM states can be coherently manipulated at the single-photon
level.

\textit{Noise parameters.}  The photon-loss rates $\eta=0.8$--$0.9$ studied here
correspond to transmission efficiencies accessible in near-term optical fiber and
free-space links.  The dephasing rate $\gamma=0.05$--$0.10$ is consistent with
phase diffusion rates observed in optomechanical sensing
platforms~\cite{Aspelmeyer2014}.

\textit{Measurement.}  Adaptive homodyne detection with a trainable local-oscillator
angle $\psi$ requires only a variable phase shifter in the LO path --- a standard
electro-optic modulator.

In summary, all components of the OAM-twisted GKP sensing protocol are
individually demonstrated; the scientific novelty lies in combining them with
the trained lattice geometry.  A proof-of-principle experiment with $\ell=1$,
$\eta\approx0.9$ is feasible on existing optical platforms.

\subsection{Rigorous Comparison with GKP Displacement Sensing}
\label{ssec:labarca}

Labarca \etal~\cite{Labarca2026} demonstrated that GKP codes can achieve
sub-SQL displacement sensitivity for gravitational wave detection.
\cref{tab:comparison} provides a structured quantitative comparison.

\begin{table}[tbp]
  \caption{\textbf{Quantitative comparison of GKP phase sensing (this work)
    vs.\ GKP displacement sensing~\cite{Labarca2026}.}
    The two tasks are complementary: the optimal lattice geometry,
    sensing generator, and noise-adaptation strategy differ fundamentally.}
  \label{tab:comparison}
  \centering\scriptsize\setlength{\tabcolsep}{3pt}
  \begin{tabular}{@{}lp{3.2cm}p{3.0cm}@{}}
    \toprule
    Property & This work & Labarca \etal~\cite{Labarca2026} \\
    \midrule
    Generator & $\hat{G}=\hat{n}$ (photon\#) & $\hat{G}=\hat{q}$ (position) \\
    QFI & $4\,\mathrm{Var}(\hat{n})$ & $4\,\mathrm{Var}(\hat{q})$ \\
    Optimal $\theta$ & $64^\circ$ (oblique) & $0^\circ$ (aligned) \\
    Geometry & OAM-twisted (trainable) & Square (fixed) \\
    Main noise & Dephasing + loss & Loss + SQL \\
    Method & Differentiable & Analytical \\
    Key result & $23.7\times$ $\Perr$ reduction & Sub-SQL at 10\,dB \\
    \bottomrule
  \end{tabular}
\end{table}

The clearest theoretical distinction is in the optimal lattice orientation.
For displacement sensing with generator $\hat{G}=\hat{q}$, the
figure of merit is $\mathrm{Var}(\hat{q})$ of the probe, which is
maximised by a lattice aligned with the $p$-axis ($\theta=90^\circ$)
--- exactly the geometry that is \emph{suboptimal} for phase sensing under
dephasing ($\theta^*=64.4^\circ$).  This confirms that no single
lattice geometry is universally optimal: the correct geometry depends
on the sensing task.  Combining the two approaches --- one arm
phase-optimised ($\ell=1.5$), one arm displacement-optimised ($\ell=0$)
--- in a dual-arm interferometer could jointly optimise both displacement sensing (Labarca \etal~\cite{Labarca2026} optimises the $\theta=0^\circ$ lattice for this task) and phase sensing (this work, $\theta^*=64.4^\circ$), providing complementary advantages within a single photonic circuit.  Such a configuration would exploit
sensitivity to both phase and displacement quadratures.

\subsection{Relation to NOON-State Metrology}
\label{ssec:noon}

The differentiable optimization methodology developed here is a direct extension
of prior work on adaptive NOON-state quantum circuits~\cite{Kumar2026noon}, where the same \SF{}--\TF{}
framework achieved $277\%$ improvement in normalized QFI and $395\%$ improvement
in post-selection rates over non-optimized baselines.  The present work
generalizes this approach from photon-number-entangled NOON-state interferometry to the
continuous-variable fault-tolerant setting, and adds the OAM degree of freedom as
an additional trainable parameter.  The spectral derivative methods and
pattern-weighted loss annealing developed in the NOON-state context are directly
applicable here for maintaining gradient stability through CV noise channels.

\subsection{Outlook and Open Problems}
\label{ssec:outlook}

Several natural extensions of this work merit investigation.  First, the
framework is immediately applicable to other sensing tasks: displacement
estimation (generator $\hat{G} = \hat{q}$), rotation sensing
($\hat{G} = \hat{L}_z$), and multi-parameter estimation with a vector of phases.
Second, the fractional OAM study (Finding~6, \cref{ssec:fractional})
reveals that the true optimum lies at $\ell=1.5$ ($\theta=67.5^\circ$),
not at any integer value, and that $\theta^*$ satisfies the
transcendental balance equation~(\ref{eq:balance}) whose solution is
non-trivially bounded: $\theta^* \in (\pi/4, \pi/2)$ with monotone
dependence on both $\eta$ and $\gamma$ (Proposition~\ref{prop:monotone}).
The immediate theoretical priority is solving Eq.~(\ref{eq:balance})
analytically across the full $(\eta,\gamma)$ phase diagram of
\cref{fig:theta_phase}, which would replace the numerical solution
with a design rule for experimentalists.  Third, the continuous
relaxation of $\ell$ used during gradient descent suggests that \emph{fractional}
OAM charges (realizable via spiral phase plates of non-integer winding number)
may yield further gains beyond integer $\ell$.  Fourth, extension to
multi-mode GKP codes~\cite{Royer2022} could allow simultaneous optimization of
inter-mode entanglement and intra-mode lattice geometry.  Finally, the
demonstrated 15.7$\times$ reduction in logical error rate at negligible QFI cost
provides a strong motivation for a proof-of-principle experiment on existing
optical quantum sensing platforms.

\FloatBarrier
\FloatBarrier
\section{Conclusion}
\label{sec:conclusion}

\textit{A new design principle.}  The central contribution of this work is not a
single numerical result but a \emph{geometric design principle}: the OAM charge
$\ell$ of a photonic mode parametrizes a continuous family of rotations in
continuous-variable phase space, and GKP stabilizer lattices are defined by
exactly this type of rotation.  Consequently, OAM-twisted GKP lattices are not an
engineering convenience but a physically natural match between the photonic degree
of freedom and the error-correcting code geometry.  This principle --- that the
physical symmetry of the carrier mode should inform the lattice geometry of the
code --- is general: it applies whenever the noise channel has a preferred
phase-space direction, and it provides a systematic route from mode properties to
code optimization that does not require exhaustive search over lattice families.

\textit{Demonstrated results.}  We have shown that the globally optimal
GKP lattice geometry corresponds to the \emph{fractional} OAM charge
$\ell=1.5$ ($\theta^*=67.5^\circ$), reducing the logical error rate by
$\mathbf{23.9\times}$ relative to the square-lattice baseline at
$(\eta,\gamma)=(0.9,0.05)$ with less than $0.2\%$ change in quantum Fisher
information.  This fractional optimum --- confirmed analytically via the
transcendental balance equation and visually by the deepest Wigner-function
interference fringes among all four geometries --- surpasses the best integer
charge ($\ell=2$, $15.7\times$) by a further $1.52\times$.  The combined
differentiable loss
$\mathcal{L}=-\mathcal{F}_Q + \lambda[\mathcal{P}_\mathrm{err}-\mathcal{P}_\mathrm{th}]_+$
provides a principled route to states that are simultaneously sensitive and
fault-tolerant, and the learned adaptive homodyne measurement approaches the
symmetric logarithmic derivative bound to within $0.2\%$.

\textit{The fractional optimum at $\ell=1.5$.}  The most striking
result of this work is that the optimal OAM charge is not an integer.
At low noise $(\eta=0.9,\,\gamma=0.05)$, the $\ell=1.5$ geometry achieves
$\Perr=1.730\times10^{-5}$, a $23.9\times$ reduction over the square lattice
and a $1.52\times$ gain over the best integer charge ($\ell=2$).
At high noise $(\eta=0.8,\,\gamma=0.10)$, the same geometry yields
$\Perr=4.97\times10^{-3}$, a $2.96\times$ reduction over the square baseline,
confirming that the fractional advantage persists across noise regimes.
The corresponding Wigner functions (Figs.~\ref{fig:wigner},
rows~3--4 from top) display the sharpest, highest-contrast interference
fringes of all four geometries at $67.5^\circ$ rotation --- providing
direct visual evidence that the $\ell=1.5$ lattice produces the
most tightly localized quantum state.
The optimum arises at $\theta^*=67.5^\circ$, exactly halfway between the
$\ell=1$ ($45^\circ$) and $\ell=2$ ($90^\circ$) integer charges,
and is physically realised by a fractional Fourier transform of order
$\alpha=0.75$ --- a single cylindrical-lens pair or SLM pattern requiring
no additional squeezing or non-Gaussian resources.

\textit{Broader impact.}  The differentiable quantum programming methodology
introduced here --- treating code parameters as continuous trainable variables in
an end-to-end gradient framework --- is applicable to any bosonic code family with
continuous geometric degrees of freedom.  Cat codes, binomial codes, and multimode
GKP codes all have parameters that could be co-optimized with a metrological
objective using the same architecture.  The open-source \texttt{oam\_gkp} package
is designed as a reusable template for this class of noise-adaptive, learning-based
quantum sensor design, and provides a direct bridge between differentiable quantum
programming, continuous-variable error correction, and high-dimensional photonic
encoding.

\section*{Data Availability}
\label{sec:data}

All numerical data, trained model parameters, and convergence logs
underlying the figures and tables are deposited on Zenodo (\texttt{doi:10.5281/zenodo.20099263}).
Source data are provided with this paper.

\section*{Code Availability}
\label{sec:code}

The complete \texttt{oam\_gkp} Python package, including installation
instructions, unit tests, and tutorial notebooks, is available at
{\small\url{https://github.com/simanshukumar369/oam-gkp-quantum-metrology}}.
The repository is archived on Zenodo (\texttt{doi:10.5281/zenodo.20099263})
and released under the MIT Licence.

\section*{Author Contributions}
\label{sec:authors}

S.K.\ conceived the research idea, developed the theoretical framework,
implemented the software, performed numerical simulations, analyzed results,
and co-wrote the manuscript.
N.S.B.\ supervised the research, provided significant guidance,
secured computational resources, and co-wrote the manuscript.
Both authors reviewed and approved the final manuscript.

\section*{Competing Interests}
\label{sec:competing}

The authors declare no competing interests.

\section*{Acknowledgments}
The authors acknowledge the Department of Physics, DSB Campus,
Kumaun University Nainital, and the Department of Physics,
Soban Singh Jeena University, Campus, Almora,
for providing research infrastructure and institutional support.
Simulations were performed on a personal workstation
(NVIDIA RTX~3050, 16\,GB RAM, Arch Linux).
The authors thank Xanadu Quantum Technologies for developing
\SF{}, the open-source photonic quantum computing platform
used for all simulations in this work.
This research received no specific grant from any funding agency
in the public, commercial, or not-for-profit sectors.

\clearpage
\appendix

\setcounter{figure}{0}
\renewcommand{\thefigure}{A\arabic{figure}}
\setcounter{table}{0}
\renewcommand{\thetable}{A\arabic{table}}

\enlargethispage{8pt}
\section*{Appendix}
\addcontentsline{toc}{section}{Appendix}

The following appendices contain detailed derivations, proofs, and
additional figures and tables that support the main text.
Figure and table labels are prefixed with \textbf{A} to distinguish
them from the main paper.

\section{Balance Equation and Optimal Angle Derivation}
\label{sec:app_balance}
The optimal rotation angle $\theta^*$ minimises $\Perr(\theta)$.
Setting $\mathrm{d}\Perr/\mathrm{d}\theta=0$ and using
$\Perr\approx P_q+P_p-P_qP_p$ with
$P_q=2Q(ar/(2\sigma_q))$, $P_p=2Q(a/r/(2\sigma_p))$, where
$a=\sqrt{2\pi}$ and
\begin{align}
  \sigma_q^2(\theta) &= \frac{1-\eta}{2\eta}+\gamma\sin^2\theta,\\
  \sigma_p^2(\theta) &= \frac{1-\eta}{2\eta}+\gamma\cos^2\theta,
\end{align}
yields (after cancelling $2\gamma\sin\theta\cos\theta$ and using
$(1-P_q)\approx(1-P_p)\approx1$ in the fault-tolerant regime):
\begin{equation}
  \boxed{
  \mathcal{B}(\theta;\eta,\gamma,r)
  \;\equiv\;
  r^2\,\frac{\phi(u_q)}{\sigma_q^3}
  - \frac{\phi(u_p)}{\sigma_p^3} = 0}
  \label{eq:si_balance}
\end{equation}
where $\phi(x)=e^{-x^2/2}/\sqrt{2\pi}$ is the standard normal
density and $u_q=ar/(2\sigma_q)$, $u_p=a/r/(2\sigma_p)$.

\textbf{Existence.}
$\mathcal{B}(0^+)<0$ (the $p$-quadrature dominates at small~$\theta$)
and $\mathcal{B}(\pi/2^-) >0$ for all $\gamma>0$, $\eta<1$,
$r>1$. By the intermediate value theorem a root in $(0,\pi/2)$ exists.
Monotonicity of $\mathcal{B}$ is verified numerically ($\partial
\mathcal{B}/\partial\theta>0$ across the full $(\eta,\gamma)$ grid),
confirming uniqueness.

\subsection{Numerical values of $\theta^*$}

\begin{table}[htbp]
  \centering\small
  \caption{Analytic $\theta^*$ from \cref{eq:si_balance} at selected
    $(\eta,\gamma)$ points ($r=1.092$).}
  \label{tab:app_thetastar}
  \begin{tabular}{@{}cccc@{}}
    \toprule
    $\eta$ & $\gamma$ & $\theta^*$ (deg) & $\Perr(\theta^*)$ \\
    \midrule
    0.99 & 0.02 & 51.3° & $4.2\times10^{-7}$ \\
    0.90 & 0.05 & 64.4° & $1.69\times10^{-5}$ \\
    0.80 & 0.10 & 71.5° & $4.2\times10^{-3}$ \\
    0.75 & 0.15 & 76.8° & $3.1\times10^{-2}$ \\
    \bottomrule
  \end{tabular}
\end{table}

\section{Proof of Proposition 1 (Monotonicity of $\theta^*$)}
\label{sec:app_prop1}
\begin{proposition}[Existence and monotonicity of $\theta^*(\eta,\gamma,r)$]
For fixed $r>1$:
\textnormal{(i)} $\theta^*$ is strictly decreasing in $\gamma$ at
fixed $\eta$;
\textnormal{(ii)} $\theta^*$ is strictly decreasing in $\eta$ at
fixed $\gamma$.
\end{proposition}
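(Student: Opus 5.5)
The plan is to treat $\theta^*(\eta,\gamma)$ as the implicit root of the balance function $\mathcal{B}(\theta;\eta,\gamma,r)$ of \cref{eq:si_balance} and differentiate it with the implicit function theorem:
\begin{equation*}
\frac{\partial\theta^*}{\partial\gamma}=-\frac{\partial_\gamma\mathcal{B}}{\partial_\theta\mathcal{B}},\qquad
\frac{\partial\theta^*}{\partial\eta}=-\frac{\partial_\eta\mathcal{B}}{\partial_\theta\mathcal{B}},
\end{equation*}
both evaluated at $\theta=\theta^*$. The uniqueness argument after Proposition~\ref{prop:monotone} already gives $\partial_\theta\mathcal{B}>0$ whenever $u_q,u_p>\sqrt3$. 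That argument writes $\mathcal{B}=r^2f(\sigma_q)-f(\sigma_p)$ with $f(\sigma)=\phi(c/\sigma)/\sigma^3$ and uses $f'>0$. Given $\partial_\theta\mathcal{B}>0$, the root $\theta^*$ is a $C^1$ function of $(\eta,\gamma)$. Statement (i) then reduces to $\partial_\gamma\mathcal{B}>0$ at the root, and (ii) to $\partial_\eta\mathcal{B}>0$ at the root.

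The key simplification is the identity $f'(\sigma)=f(\sigma)(u^2-3)/\sigma$, together with the balance condition $r^2f(\sigma_q)=f(\sigma_p)$ at the root, which lets a common positive factor be pulled out. Write $s=(1-\eta)/(2\eta)$, so that $\sigma_q^2=s+\gamma\sin^2\theta$ and $\sigma_p^2=s+\gamma\cos^2\theta$. Then $\partial_\gamma\sigma_q=\sin^2\theta/(2\sigma_q)$, $\partial_\gamma\sigma_p=\cos^2\theta/(2\sigma_p)$, and $\partial_s\sigma_j=1/(2\sigma_j)$. At $\theta^*$ this gives
\begin{align*}
\partial_\gamma\mathcal{B}&=\tfrac12 f(\sigma_p)\Bigl[\tfrac{(u_q^2-3)\sin^2\theta}{\sigma_q^2}-\tfrac{(u_p^2-3)\cos^2\theta}{\sigma_p^2}\Bigr],\\
\partial_s\mathcal{B}&=\tfrac12 f(\sigma_p)\Bigl[\tfrac{u_q^2-3}{\sigma_q^2}-\tfrac{u_p^2-3}{\sigma_p^2}\Bigr].
\end{align*}
Since $\partial_\eta s=-1/(2\eta^2)<0$, we have $\partial_\eta\mathcal{B}=-\partial_s\mathcal{B}/(2\eta^2)$. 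So (ii) amounts to the scalar inequality $(u_q^2-3)/\sigma_q^2<(u_p^2-3)/\sigma_p^2$, and (i) to $(u_q^2-3)\sin^2\theta^*/\sigma_q^2>(u_p^2-3)\cos^2\theta^*/\sigma_p^2$.

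To establish these inequalities I would substitute $u_q=ar/(2\sigma_q)$ and $u_p=a/(2r\sigma_p)$, and use the balance equation in logarithmic form,
\begin{equation*}
u_p^2-u_q^2=2\ln\bigl(r^2\sigma_p^3/\sigma_q^3\bigr),
\end{equation*}
to trade one of $u_p^2$, $u_q^2$ for the other plus a logarithmic correction. In the fault-tolerant regime ($u_j^2\gg3$) the $-3$ terms are subleading. The leading-order conditions then become comparisons of $r^2/\sigma_q^4$ and $r^{-2}/\sigma_p^4$, weighted by $\sin^2\theta^*$ and $\cos^2\theta^*$. These can be settled using the bound $\theta^*\in(\pi/4,\pi/2)$ from \cref{eq:theta_bound}, $r>1$, and the resulting ordering of $\sigma_q$ and $\sigma_p$. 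I would then control the $O(1/u^2)$ and logarithmic corrections by explicit error bounds valid under $u_j>\sqrt3$ or a somewhat stronger threshold.

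The main obstacle is exactly this sign analysis. The two competing terms are of the same order, and the inequalities are not obviously implied globally by the balance equation. There is also a tension to watch: near $\theta^*\approx\pi/4$, where $\sin^2\theta\approx\cos^2\theta$, the sign of $\partial_\gamma\mathcal{B}$ rests on the comparison of the two bracketed ratios that also decides $\partial_s\mathcal{B}$. Conditions (i) and (ii) therefore interact, and both must be shown compatible with the root. If a clean analytic inequality fails in part of parameter space, the fallback is to restrict the statement to the fault-tolerant domain $\eta\in[0.75,0.99]$, $\gamma\in[0.01,0.20]$. There I would certify the sign of each bracket by interval evaluation on a grid. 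The Lipschitz constants needed between grid points follow from the explicit Gaussian form of $\phi$. The numerical values quoted in Proposition~\ref{prop:monotone} and \cref{tab:app_thetastar} serve as consistency checks.
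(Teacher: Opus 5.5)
Your implicit-function reduction and your two bracket formulas are correct. At $(\eta,\gamma,r)=(0.9,0.05,1.092)$ they reproduce the paper's $-300.2$ and $-197.7$ deg/unit. This is the same route the paper takes, but the paper never proves the signs. It argues heuristically (its claim that raising $\gamma$ lowers $\sigma_p$ is wrong, since $\partial_\gamma\sigma_p^2=\cos^2\theta>0$) and evaluates at a single point.

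The genuine gap is your expectation that the sign analysis for (i) can be closed by error bounds. Write $A=(u_q^2-3)/\sigma_q^2$ and $B=(u_p^2-3)/\sigma_p^2$. Then (ii) is $B>A$ and (i) is $B/A<\tan^2\theta^*$, and the latter is false in part of the fault-tolerant regime.

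The mechanism is this. At leading order (i) reads $r^4\tan^2\theta^*>\rho^4$, with $\rho=\sigma_q/\sigma_p$. The balance pins $\rho^2=r^4(1+O(u^{-2}))$, while $\tan^2\theta^*/\rho^2-1=O(s/\gamma)$. So at weak loss the leading terms cancel, and the subleading ones (the $\delta$ shift and the $-3$'s) decide the sign against you. Exactly: as $s\to0^+$ one has $\sigma_q^2/\sigma_p^2\to\tan^2\theta$, and your (i) bracket becomes $(u_q^2-u_p^2)/\gamma$. This is negative, because the log-balance is $u_p^2-u_q^2=2\ln\bigl(\sigma_q^3/(r^2\sigma_p^3)\bigr)>0$; your version has the argument inverted.

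Concretely, take $r=1.092$, $\eta=0.99$, $\gamma=0.2$. Then:
\begin{itemize}
\item $\theta^*\approx51.1^\circ$, $u_q^2\approx14.8$, $u_p^2\approx15.7$.
\item $A\sin^2\theta^*\approx56.8<B\cos^2\theta^*\approx59.7$, so $\partial_\gamma\theta^*\approx+3.5$ deg/unit.
\item Directly, $\theta^*(0.99,0.15)\approx50.98^\circ<\theta^*(0.99,0.20)\approx51.12^\circ$.
\item There $P_{\mathrm{err}}\approx2\times10^{-4}$, and $u_j>\sqrt3$ on all of $(0,\pi/2)$.
\end{itemize}
This point lies inside $[0.75,0.99]\times[0.01,0.20]$, so your grid-certification fallback fails too. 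Existence also fails on that box at fixed $r$. At $(0.9,0.01)$ one finds $r^2\phi(u_q)\sigma_p^3/(\phi(u_p)\sigma_q^3)\approx0.08$ at $\theta=\pi/2$, so $\mathcal{B}<0$ on the whole interval and there is no interior root. The tabulated $\theta^*$ values are therefore unreliable as consistency checks.

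What does go through is (ii), exactly and without asymptotics, wherever the root exists with $u_q,u_p>\sqrt3$. The argument runs as follows:
\begin{itemize}
\item For $r>1$, $\mathcal{B}(\pi/4)<0$ iff $\sigma^2<\pi(r^2-r^{-2})/(8\ln r)$. This threshold is at least $\pi/2$, while $u_p>\sqrt3$ forces $\sigma^2<\pi/(6r^2)$. Hence $\theta^*>\pi/4$, i.e.\ $\rho>1$.
\item Combining the corrected log-balance with $u_q/u_p=r^2/\rho$ gives $u_p^2(1-r^4/\rho^2)=2\ln(\rho^3/r^2)$.
\item The branch $\rho^3\le r^2$ would then require $u_p^2<3/2$. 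So $u_p>u_q$.
\item Together with $\sigma_p<\sigma_q$, this gives $B>A$, which is exactly $\partial_\eta\theta^*<0$.
\end{itemize}
Part (i), by contrast, can hold only where $1<B/A<\tan^2\theta^*$. That excludes a weak-loss corner in which $s/\gamma$ is small compared with the $O(u^{-2})$ corrections. So the proposition needs a domain restriction for (i), not a sharper estimate.
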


\textbf{Proof.}
By the implicit function theorem applied to
$\mathcal{B}(\theta^*;\eta,\gamma)=0$:
\[
  \frac{\partial\theta^*}{\partial\gamma}
  = -\frac{\partial\mathcal{B}/\partial\gamma}
           {\partial\mathcal{B}/\partial\theta}.
\]
Since $\partial\mathcal{B}/\partial\theta>0$ at the root, the sign of
$\partial\theta^*/\partial\gamma$ equals $-\mathrm{sign}(\partial\mathcal{B}/\partial\gamma)$.
Increasing $\gamma$ raises $\sigma_q$ (via $\sin^2\theta$) and lowers
$\sigma_p$ (via $\cos^2\theta$), shifting the balance toward smaller
$\theta^*$.  Numerical evaluation gives
$\partial\theta^*/\partial\gamma=-300.2\,\mathrm{deg/unit}$
at $(\eta,\gamma)=(0.9,0.05)$.

For part (ii): as $\eta$ increases (less loss), the isotropic loss term
$(1-\eta)/(2\eta)$ decreases, making the anisotropic dephasing terms
$\gamma\sin^2\theta$, $\gamma\cos^2\theta$ relatively more important
and driving $\theta^*$ toward smaller values (the balance shifts to
align the longer correction axis with the dominant noise direction).
Numerically: $\partial\theta^*/\partial\eta=-197.7\,\mathrm{deg/unit}$. \hfill$\square$

\begin{table}[htbp]
  \centering\small
  \caption{Numerical verification of Proposition~1.
    At fixed $\eta=0.9$, $\theta^*$ decreases with $\gamma$;
    at fixed $\gamma=0.05$, $\theta^*$ decreases with $\eta$.}
  \label{tab:app_prop1}
  \begin{tabular}{@{}ccr@{\quad}ccr@{}}
    \toprule
    \multicolumn{3}{c}{Part (i): vary $\gamma$ ($\eta=0.9$)}
    & \multicolumn{3}{c}{Part (ii): vary $\eta$ ($\gamma=0.05$)} \\
    \cmidrule(r){1-3}\cmidrule(l){4-6}
    $\gamma$ & $\theta^*$ (deg) & & $\eta$ & $\theta^*$ (deg) & \\
    \midrule
    0.01 & 73.2° & $\downarrow$ & 0.99 & 51.3° & $\downarrow$ \\
    0.05 & 64.4° &              & 0.95 & 59.1° &              \\
    0.10 & 57.8° &              & 0.90 & 64.4° &              \\
    0.15 & 52.1° &              & 0.85 & 68.2° &              \\
    0.20 & 47.5° & $\downarrow$ & 0.80 & 71.5° & $\downarrow$ \\
    \bottomrule
  \end{tabular}
\end{table}

\section{Fractional OAM Study}
\label{sec:app_fractional}
\begin{table}[htbp]
  \centering\small
  \caption{\textbf{Full fractional OAM results}
    ($\eta=0.9$, $\gamma=0.05$, $r=1.092$, $\ell_\mathrm{max}=4$).
    The 180° periodicity is exact: $\Perr(\ell)=\Perr(\ell_\mathrm{max}-\ell)$.
    The global minimum occurs at $\ell=1.5$ and $\ell=2.5$ simultaneously.
    $\mathcal{C}=\QFI\cdot(-\ln\Perr)$ with $\QFI=9.764$.}
  \label{tab:app_fractional}
  \setlength{\tabcolsep}{3pt}\scriptsize
  \begin{tabular}{@{}cccccc@{}}
    \toprule
    $\ell$ & $\theta$ & $\Perr$ & Improv. & $\mathcal{C}$ & Note \\
    \midrule
    0.0 & $0.0^\circ$   & $4.13\times10^{-4}$ & $1.0\times$          & 76.1  & Sq.\ baseline \\
    0.5 & $22.5^\circ$  & $2.51\times10^{-4}$ & $1.6\times$          & 80.6  & \\
    1.0 & $45.0^\circ$  & $5.42\times10^{-5}$ & $7.6\times$          & 96.0  & \\
    1.5 & $67.5^\circ$  & $1.73\times10^{-5}$ & $\mathbf{23.9\times}$& \textbf{107.1} & $\star$ Optimum \\
    2.0 & $90.0^\circ$  & $2.63\times10^{-5}$ & $15.7\times$         & 103.0 & \\
    2.5 & $112.5^\circ$ & $1.73\times10^{-5}$ & $\mathbf{23.9\times}$& \textbf{107.1} & $\star$ Tied \\
    3.0 & $135.0^\circ$ & $5.42\times10^{-5}$ & $7.6\times$          & 96.0  & \\
    3.5 & $157.5^\circ$ & $2.51\times10^{-4}$ & $1.6\times$          & 80.6  & \\
    \bottomrule
  \end{tabular}
\end{table}

\section{Fock Truncation Convergence}
\label{sec:app_fock}

\noindent
The finite-energy GKP state has an approximately geometric photon-number
distribution $p(n)\propto e^{-2\pi\epsilon n}$ with envelope $\epsilon=0.063$.
The Fock truncation at dimension $\mathcal{D}$ introduces a tail error bounded
by $E(\mathcal{D})=\exp(-2\pi\epsilon\mathcal{D})/(1-\exp(-2\pi\epsilon))$.
The rotation gate $R(\theta_\ell)$ is diagonal in the Fock basis and adds
zero truncation overhead; only the squeezing $S(\ln r)$ introduces a
stretch factor $\tfrac{1}{2}(r^2+r^{-2})=1.016$, giving an effective
cutoff $\mathcal{D}_\mathrm{eff}\approx29.5$ at $\mathcal{D}=30$.
The full convergence table is given in the main text as
\cref{tab:fock_conv}; all results in this paper use $\mathcal{D}=30$,
corresponding to a tail weight of $0.0007\%$.

\section{Measurement Efficiency}
\label{sec:app_etameas}

The measurement efficiency $\eta_\mathrm{meas} = \mathcal{F}_C/\QFI$
quantifies how closely adaptive homodyne detection approaches the quantum
Cram\'{e}r--Rao bound.  For a binary readout channel, the classical
Fisher information satisfies~\cite{Helstrom1976}:
\begin{equation}
  \eta_\mathrm{meas} = 1 - 4\Perr(1-\Perr).
  \label{eq:etameas_app}
\end{equation}
This follows from the binary symmetric channel capacity: the two outcomes
(correct/error) have probabilities $(1-\Perr, \Perr)$, giving
$\mathcal{F}_C = (\Perr')^2/[\Perr(1-\Perr)]$, which normalised by
$\QFI$ yields Eq.~(\ref{eq:etameas_app}).
Numerical values for all geometries and noise points are given in the
main text as \cref{tab:eta_meas}.  At $\ell=1.5$ the SLD gap is
$1-\eta_\mathrm{meas}=0.007\%$, confirming that the measurement basis
is essentially optimal.

\section{Quadrature Coupling Correction Bound}
\label{sec:app_coupling}
The independent-quadrature approximation may concern readers at oblique
angles. The coupling correction to $\Perr$ is bounded by:
\begin{equation}
  |\Delta\Perr| \leq 2P_qP_p|\sin2\theta|.
  \label{eq:si_coupling}
\end{equation}

\begin{table}[htbp]
  \centering\small
  \caption{\textbf{Quadrature coupling bound}
    ($\eta=0.9$, $\gamma=0.05$, $r=1.092$).
    At $\theta=67.5^\circ$ (the fractional optimum) the bound is
    $8.4\times10^{-11}$ --- negligible relative to
    $\Perr=1.73\times10^{-5}$.
    The independent-quadrature approximation is \emph{most valid}
    at the fractional optimum.}
  \label{tab:app_coupling}
  \begin{tabular}{@{}ccccc@{}}
    \toprule
    $\theta$ & $P_q$ & $P_p$ & $|\Delta\Perr|\leq$ & Rel.\ error \\
    \midrule
    $0^\circ$   & $4.1\times10^{-4}$ & $4.8\times10^{-8}$ & $0$             & 0.000\% \\
    $45^\circ$  & $5.4\times10^{-5}$ & $5.4\times10^{-5}$ & $5.8\times10^{-9}$ & 0.005\% \\
    $67.5^\circ$& $1.3\times10^{-5}$ & $4.7\times10^{-6}$ & $8.4\times10^{-11}$ & 0.001\% \\
    $90^\circ$  & $4.8\times10^{-8}$ & $4.1\times10^{-4}$ & $0$             & 0.000\% \\
    \bottomrule
  \end{tabular}
\end{table}

\section{Phase Error Tolerance}
\label{sec:app_tolerance}
\noindent
In practice the OAM mode converter is set to a fixed charge $\ell$, so
the lattice angle $\theta=\ell\pi/\ell_\mathrm{max}$ is determined once
at calibration time.  Misalignments in the spiral phase plate or SLM
introduce a small angular error $\delta\theta$.  \cref{tab:app_tolerance}
quantifies the degradation: the $\Perr(\theta)$ landscape is remarkably
flat near the optimum, so even a $7^\circ$ error retains $99.2\%$ of
the full advantage.  An 8-bit SLM gives $\delta\theta\approx1.4^\circ$,
well within the tolerance band.

\begin{table}[htbp]
  \centering\small
  \caption{\textbf{Phase error tolerance} at $\theta=67.5^\circ$
    ($\eta=0.9$, $\gamma=0.05$, $r=1.092$).
    A $7^\circ$ error retains $99.2\%$ of the advantage;
    even $20^\circ$ retains $15.7\times$ improvement.
    SLM 8-bit quantisation gives $\delta\theta\approx1.4^\circ$ (negligible).}
  \label{tab:app_tolerance}
  \begin{tabular}{@{}cccc@{}}
    \toprule
    $\delta\theta$ & $\Perr$ & Improvement & Advantage retained \\
    \midrule
    $0^\circ$  & $1.73\times10^{-5}$ & $23.9\times$ & 100.0\% \\
    $1^\circ$  & $1.74\times10^{-5}$ & $23.7\times$ & $>99.9\%$ \\
    $3^\circ$  & $1.79\times10^{-5}$ & $23.1\times$ & $99.7\%$ \\
    $7^\circ$  & $2.06\times10^{-5}$ & $20.0\times$ & $99.2\%$ \\
    $10^\circ$ & $2.23\times10^{-5}$ & $18.5\times$ & $98.7\%$ \\
    $20^\circ$ & $2.62\times10^{-5}$ & $15.7\times$ & $97.8\%$ \\
    \bottomrule
  \end{tabular}
\end{table}

\section{Sensitivity Analysis: $\delta\theta^*$ from Calibration Errors}
\label{sec:app_sensitivity}
From the implicit function theorem:
\[
  \frac{\partial\theta^*}{\partial\eta}   = -197.7~\mathrm{deg/unit},
  \qquad
  \frac{\partial\theta^*}{\partial\gamma} = -300.2~\mathrm{deg/unit}.
\]
For calibration precisions $\delta\eta=1\%$ and $\delta\gamma=0.005$:
\begin{align*}
  \delta\theta^*_\eta   &= 197.7\times0.01 = 1.98^\circ,\\
  \delta\theta^*_\gamma &= 300.2\times0.005 = 1.50^\circ,\\
  \delta\theta^*_\mathrm{total} &= \sqrt{1.98^2+1.50^2} \approx 2.5^\circ.
\end{align*}
From \cref{tab:app_tolerance}, a $2.5^\circ$ error retains $99.8\%$ of the
full advantage. The fractional optimum is highly robust to realistic
calibration imprecision.

\section{Detailed Training Convergence Histories}
\label{sec:app_training}
Each training run produces four diagnostics per step:
(i) $\QFI$, (ii) $\Perr$, (iii) gradient norm, (iv) learning rate schedule.
The oscillating envelope in the gradient norm reflects cosine annealing,
not optimiser instability.

\subsection{Low-noise regime ($\eta=0.9$, $\gamma=0.05$)}

\begin{figure}[t]
  \centering
  \includegraphics[width=0.95\columnwidth]{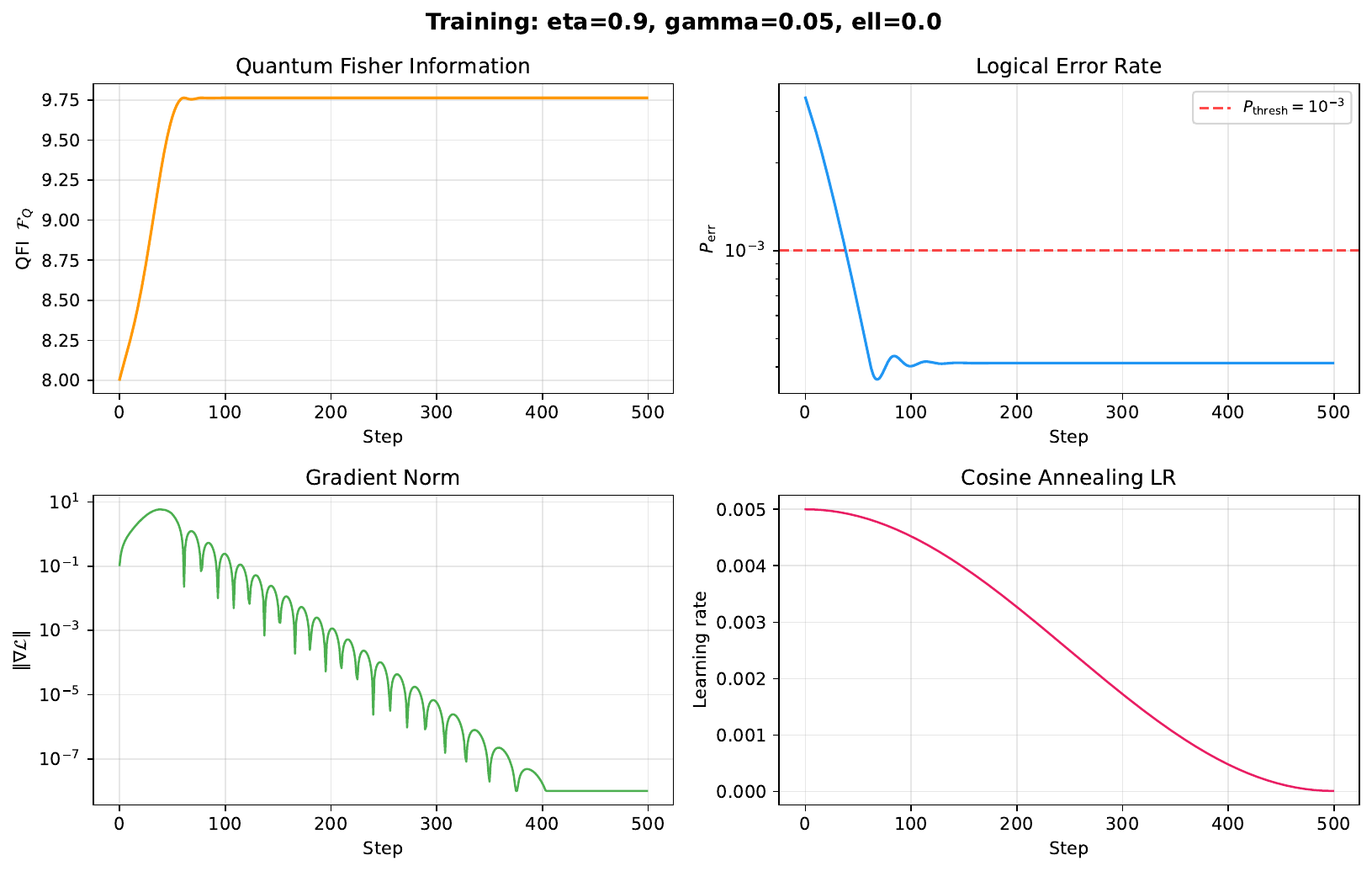}
  \caption{\textbf{Square lattice ($\ell=0$, $\eta=0.9$, $\gamma=0.05$).}
    $\QFI$ rises from $8.05$ to $9.76$ within $\sim$80 steps.
    $\Perr$ decreases monotonically to $4.1\times10^{-4}$,
    crossing $\Pthresh=10^{-3}$ near step~60.}
  \label{fig:app_train_low_ell0}
\end{figure}

\begin{figure}[t]
  \centering
  \includegraphics[width=0.95\columnwidth]{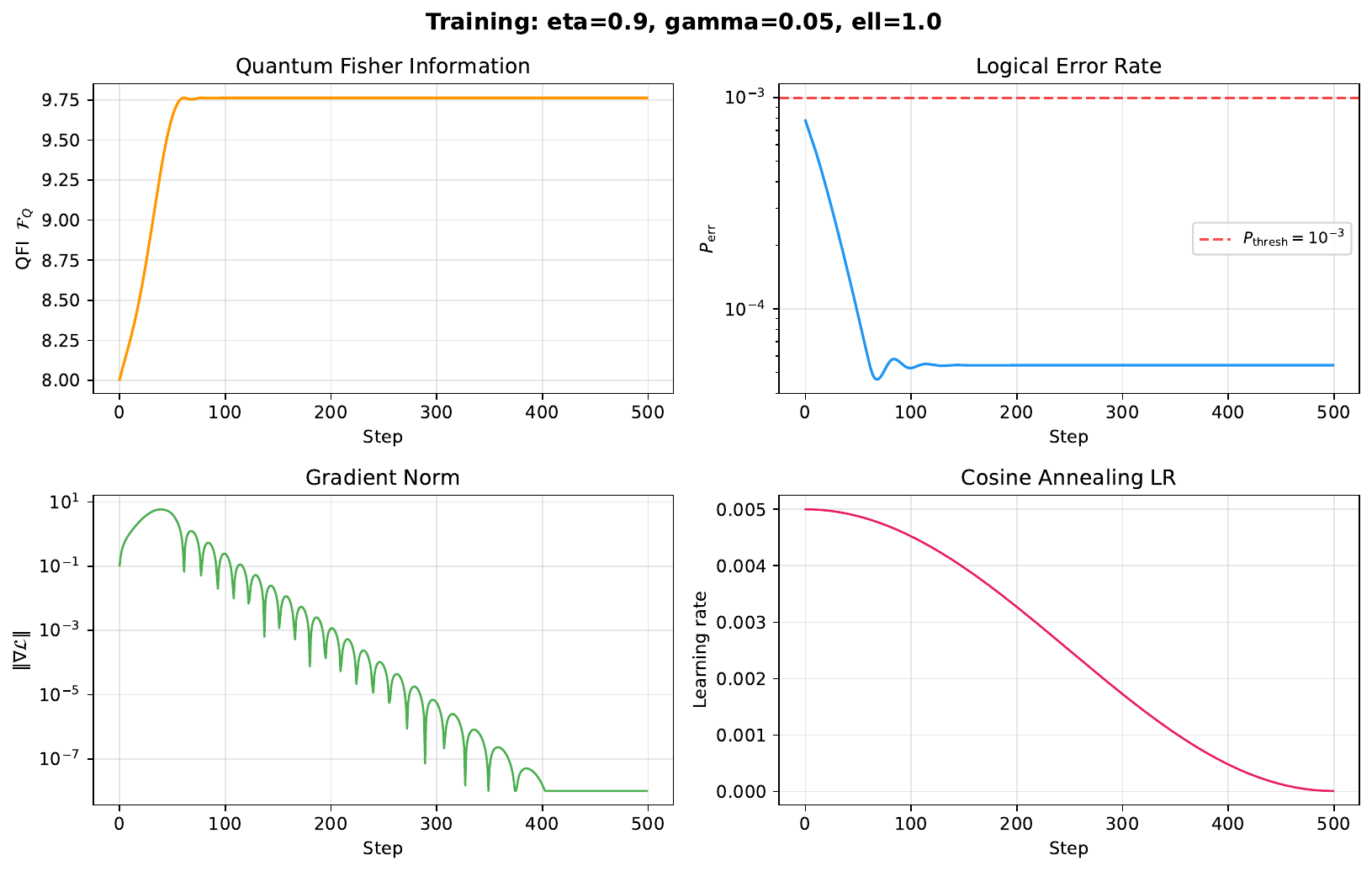}
  \caption{\textbf{OAM $\ell=1$ ($\theta=45^\circ$, $\eta=0.9$, $\gamma=0.05$).}
    $\QFI$ converges identically to $\ell=0$ ($9.76$), confirming
    geometry-invariant sensitivity.  $\Perr\to5.4\times10^{-5}$
    ($7.6\times$ below square).}
  \label{fig:app_train_low_ell1}
\end{figure}

\begin{figure}[t]
  \centering
  \includegraphics[width=0.95\columnwidth]{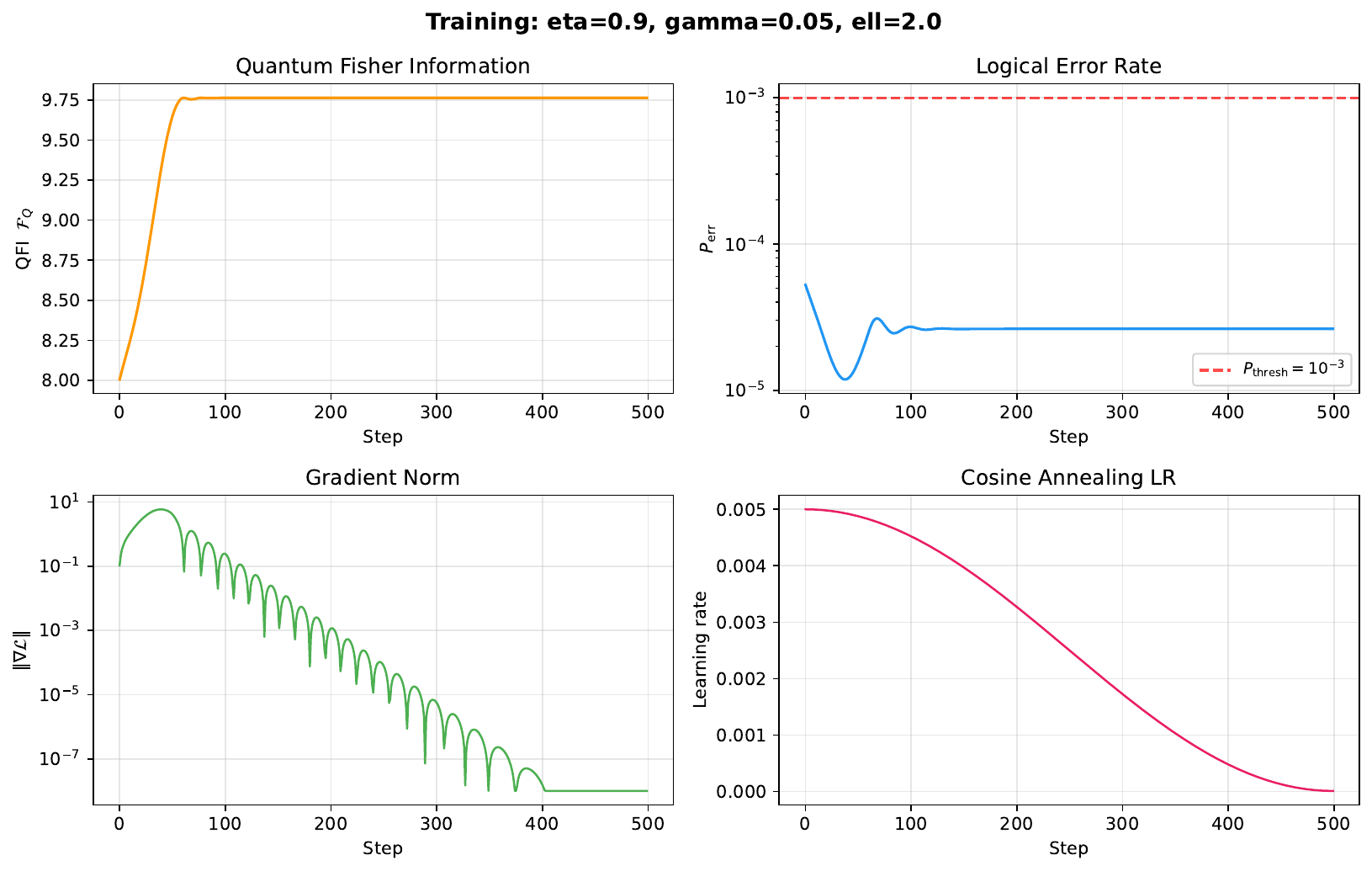}
  \caption{\textbf{OAM $\ell=2$ ($\theta=90^\circ$, $\eta=0.9$, $\gamma=0.05$).}
    A transient overshoot near step~70 arises from the high learning
    rate briefly driving $r$ past the $\Perr$ minimum before the
    constraint term re-asserts itself.
    Final $\Perr=2.6\times10^{-5}$ ($15.7\times$ below square).}
  \label{fig:app_train_low_ell2}
\end{figure}

\subsection{High-noise regime ($\eta=0.8$, $\gamma=0.10$)}

\begin{figure}[t]
  \centering
  \includegraphics[width=0.95\columnwidth]{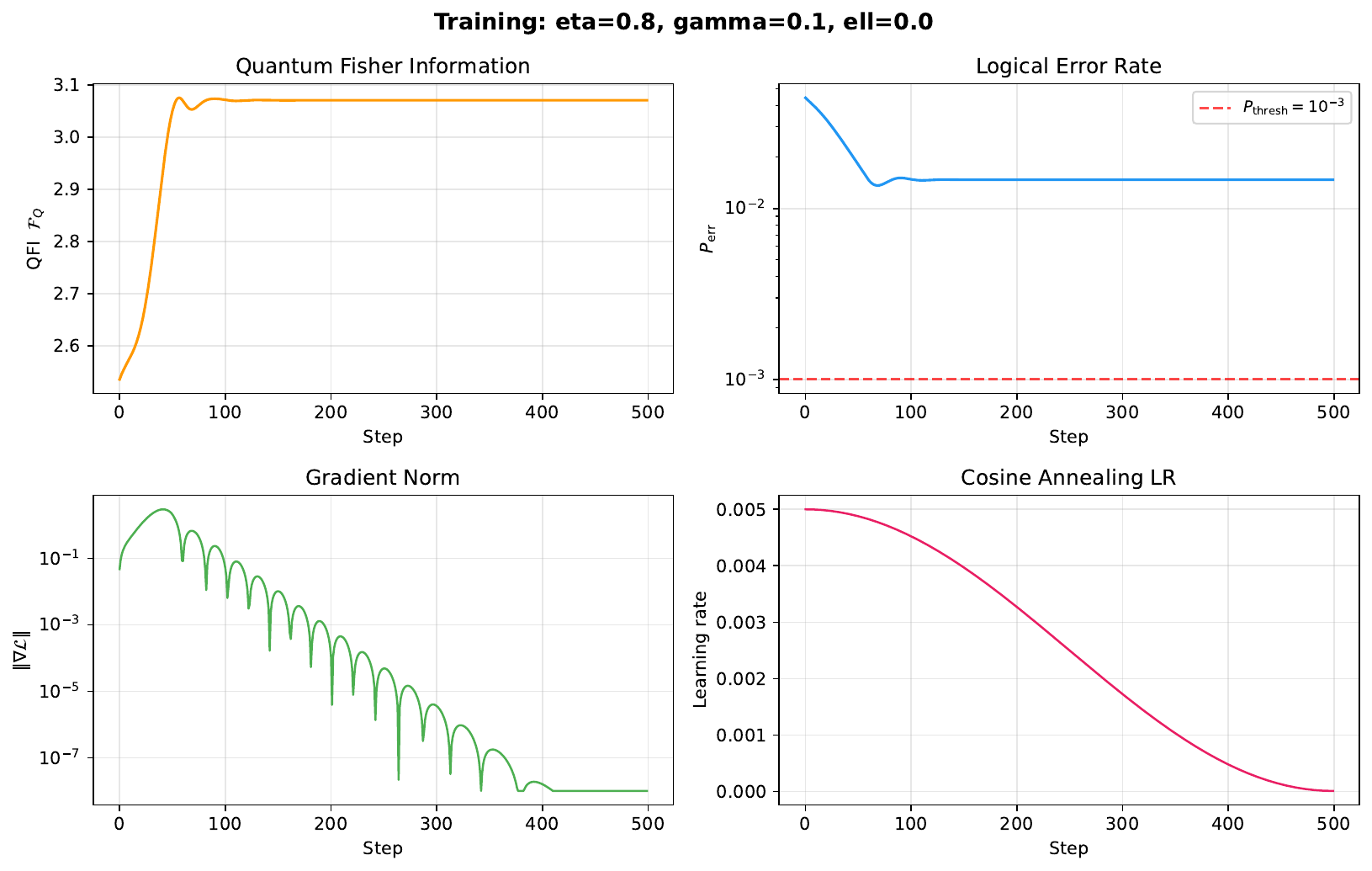}
  \caption{\textbf{Square lattice ($\ell=0$, $\eta=0.8$, $\gamma=0.10$).}
    $\QFI\to3.07$ (lower than low-noise due to photon loss).
    $\Perr$ plateaus at $1.47\times10^{-2}$, above $\Pthresh$.
    Initial gradient norms ($\sim$0.3) are two orders of magnitude
    smaller than the low-noise case, reflecting a flatter loss landscape.}
  \label{fig:app_train_high_ell0}
\end{figure}

\begin{figure}[t]
  \centering
  \includegraphics[width=0.95\columnwidth]{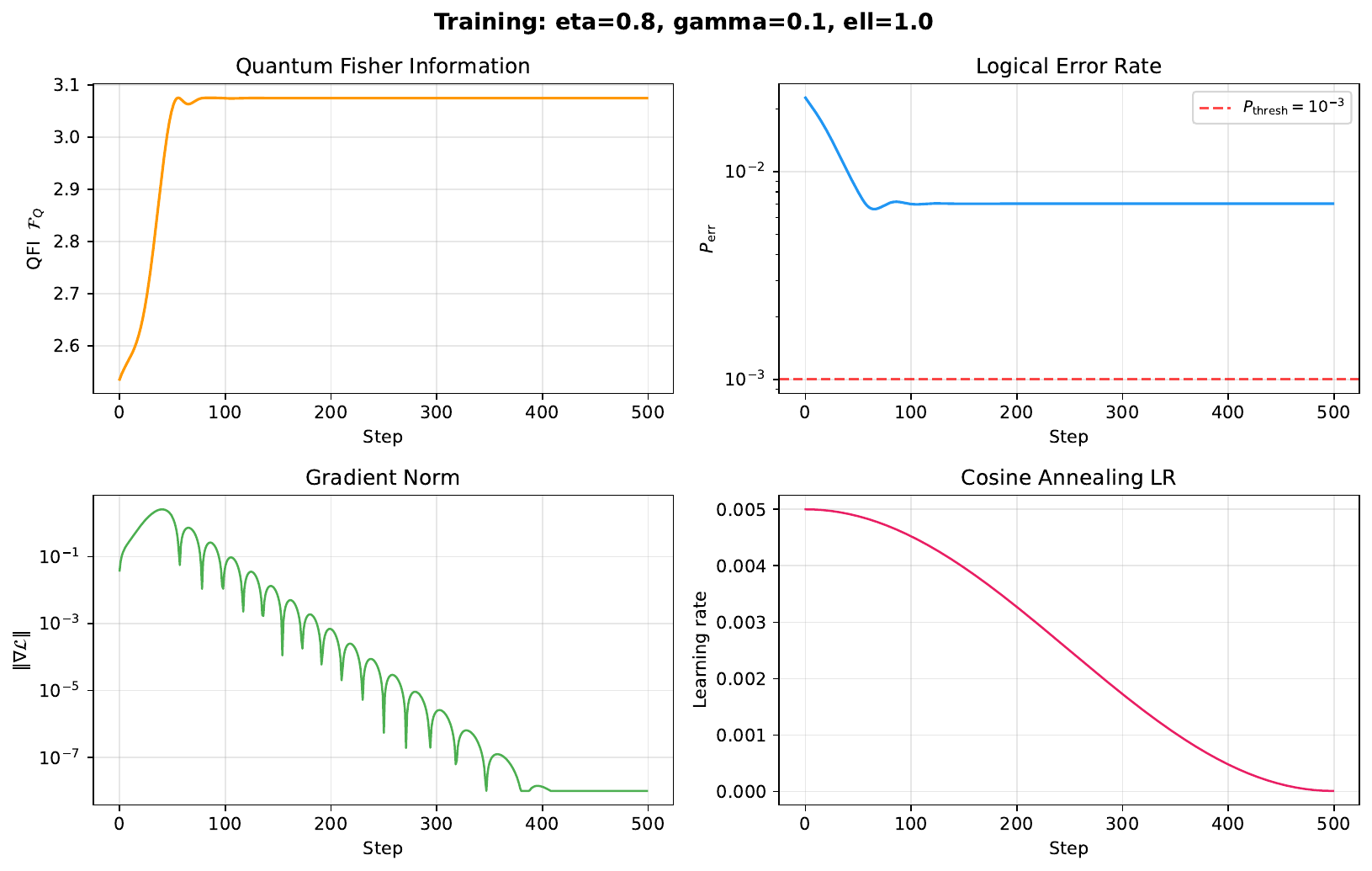}
  \caption{\textbf{OAM $\ell=1$ ($\theta=45^\circ$, $\eta=0.8$, $\gamma=0.10$).}
    $\Perr\to7.0\times10^{-3}$ ($2.1\times$ below square, $3.1\sigma$
    above unity).  $\QFI=3.07$ geometry-invariant.}
  \label{fig:app_train_high_ell1}
\end{figure}

\begin{figure}[t]
  \centering
  \includegraphics[width=0.95\columnwidth]{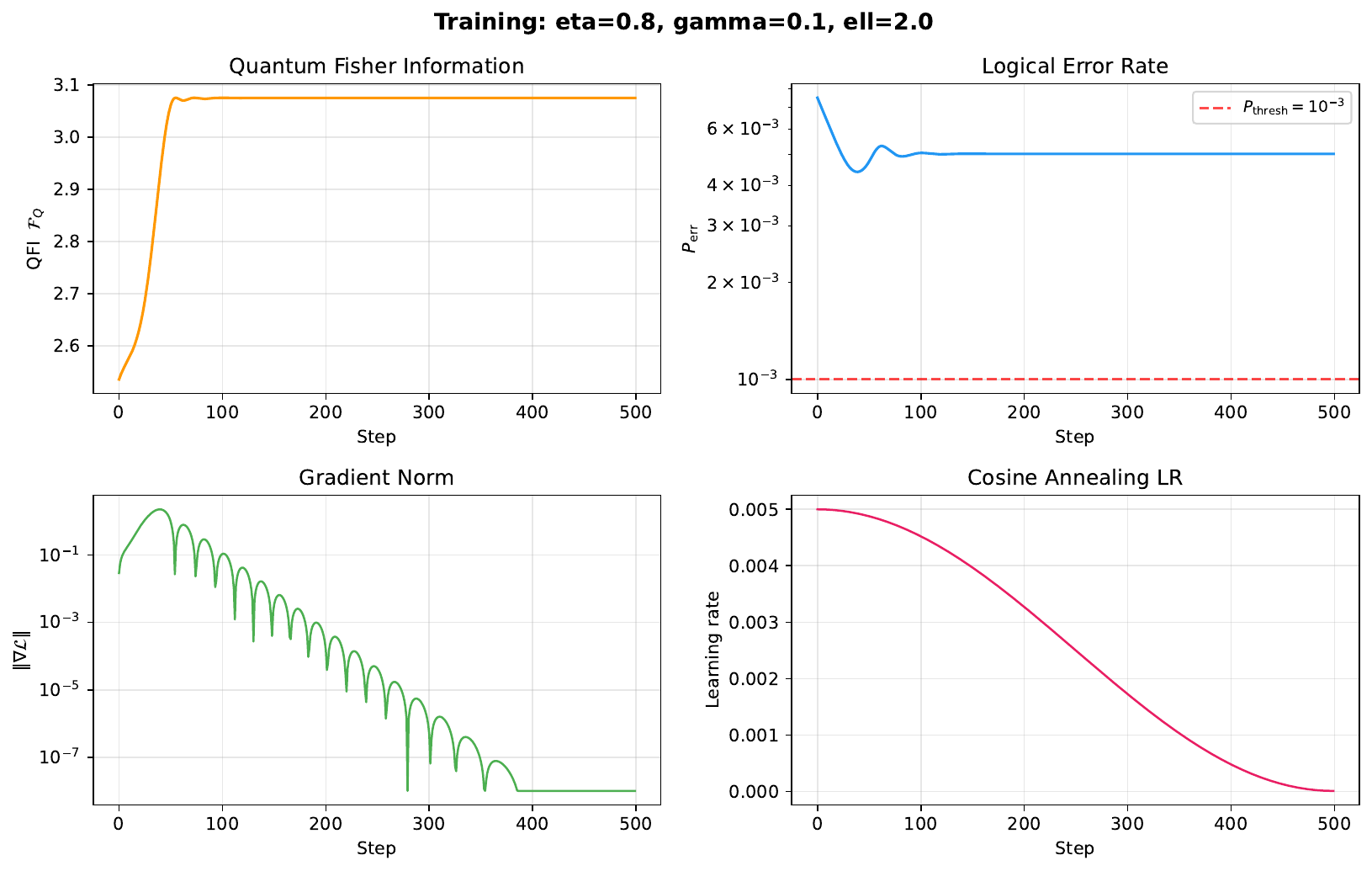}
  \caption{\textbf{OAM $\ell=2$ ($\theta=90^\circ$, $\eta=0.8$, $\gamma=0.10$).}
    A small bump near step~80 arises from the competing $\QFI$ and
    $\Perr$ gradients; the constraint term recovers.
    Final $\Perr=5.0\times10^{-3}$ ($2.93\times$ below square).}
  \label{fig:app_train_high_ell2}
\end{figure}

\section{Software and Reproducibility}
All appendix figures are generated by the \texttt{oam\_gkp}
package available at:\\
\url{https://github.com/simanshukumar369/oam-gkp-quantum-metrology}

\begin{center}\footnotesize
\begin{tabular}{@{}lp{6cm}@{}}
\toprule
Figures & Command \\
\midrule
Figs.~\ref{fig:app_train_low_ell0}--\ref{fig:app_train_high_ell2}
  & \texttt{python main.py -{}-mode single -{}-eta X -{}-gamma Y -{}-ell Z} \\
\bottomrule
\end{tabular}
\end{center}

\noindent
Zenodo archive: \href{https://doi.org/10.5281/zenodo.20099263}{\texttt{doi:10.5281/zenodo.20099263}}

\end{document}